% SIAM Article Template
\documentclass[final,onefignum,onetabnum]{siamart220329}

% Information that is shared between the article and the supplement
% (title and author information, macros, packages, etc.) goes into
% ex_shared.tex. If there is no supplement, this file can be included
% directly.

% \newcommand{\revise}[1]{{\color{blue}{#1}}}

\definecolor{ao(english)}{rgb}{0.0, 0.5, 0.0}

\newcommand{\revise}[1]{{\color{black}{#1}}}

\usepackage{comment}

% SIAM Shared Information Template
% This is information that is shared between the main document and any
% supplement. If no supplement is required, then this information can
% be included directly in the main document.

% Packages and macros go here
\usepackage{lipsum}
\usepackage{amsfonts,amsmath}
\usepackage{graphicx}
\usepackage{epstopdf}
\usepackage{amssymb,bm,color}
\usepackage{algorithmic}
\usepackage{verbatim}
\ifpdf
  \DeclareGraphicsExtensions{.eps,.pdf,.png,.jpg}
\else
  \DeclareGraphicsExtensions{.eps}
\fi

\newcommand{\N}{(\mathcal S, \mathcal C, \mathcal R)}

\newcommand{\Nk}{(\mathcal S, \mathcal C, \mathcal R, K)}
\newcommand{\wNk}{(\mathcal S, \widetilde{\mathcal C}, \widetilde{\mathcal R},\widetilde K)}

% Add a serial/Oxford comma by default.

% Used for creating new theorem and remark environments
\newsiamremark{remark}{Remark}
\newsiamremark{hypothesis}{Hypothesis}
\crefname{hypothesis}{Hypothesis}{Hypotheses}
\newsiamthm{claim}{Claim}

% Sets running headers as well as PDF title and authors
%\headers{Translation framework}{H. Hong, B. S. Hernandez, J. Kim, and J. K. Kim}
\headers{Translation framework for CRNs}{H. Hong, B. S. Hernandez, J. Kim, and J. K. Kim}

% Title. If the supplement option is on, then "Supplementary Material"
% is automatically inserted before the title.
\title{Computational translation framework identifies biochemical reaction networks with special topologies and their long-term dynamics\thanks{Submitted to the editors DATE.
\funding{HH is supported by the National Research Foundation of Korea (NRF) NRF-2019-Fostering Core Leaders of the Future Basic Science Program/Global Ph.D. Fellowship Program 2019H1A2A1075303. BH and JKK are supported by the Institute for Basic Science IBS-R029-C3. JK is supported by the National Research Foundation of Korea (NRF) grant funded by the Korea government (MSIT)(No. 2022R1C1C1008491).}}}

% Authors: full names plus addresses.
\author{Hyukpyo Hong\footnotemark[2] \footnotemark[3]
\and Bryan S. Hernandez\footnotemark[3] \footnotemark[4]
\and Jinsu Kim\footnotemark[5]
\and Jae Kyoung Kim\footnotemark[2] \footnotemark[3]}

\usepackage{amsopn}

%% Added on Overleaf: enabling xr
\makeatletter
\newcommand*{\addFileDependency}[1]{% argument=file name and extension
  \typeout{(#1)}% latexmk will find this if $recorder=0 (however, in that case, it will ignore #1 if it is a .aux or .pdf file etc and it exists! if it doesn't exist, it will appear in the list of dependents regardless)
  \@addtofilelist{#1}% if you want it to appear in \listfiles, not really necessary and latexmk doesn't use this
  \IfFileExists{#1}{}{\typeout{No file #1.}}% latexmk will find this message if #1 doesn't exist (yet)
}
\makeatother

%%% END HELPER CODE
%%% Local Variables: 
%%% mode:latex
%%% TeX-master: "ex_article"
%%% End: 

%%%%%%%%%%%%%%%%%%%%%%%%%%%%%%%%%%%%%%%

%\usetikzlibrary{shapes.multipart}
%\usepackage{tabularray}
\usepackage{makecell}
\usepackage{amsmath,amssymb,bm,color}
\usepackage{tabularx,booktabs}
\usepackage{multirow,array}
\usepackage{blkarray,calc}
\usepackage{tikz-cd}
\usepackage{tikz}

\usetikzlibrary{arrows}
\usetikzlibrary{automata,positioning}
\usetikzlibrary{circuits.logic.US,circuits.logic.IEC,fit}

\newcommand{\Sp}{\mathcal{S}}
\newcommand{\Cx}{\mathcal{C}}
\newcommand{\Rx}{\mathcal{R}}

\newtheorem{example}{Example}

\usepackage{xr}
%\externaldocument[Main-]{main_revised}
\externaldocument[Supp-]{supplementary_revised}

% Optional PDF information
\ifpdf
\hypersetup{
  pdftitle={Computational translation framework identifies biochemical reaction networks with special topologies and their long-term dynamics},
  pdfauthor={}
}
\fi

% The next statement enables references to information in the
% supplement. See the xr-hyperref package for details.

%% Use \myexternaldocument on Overleaf
%\myexternaldocument{ex_supplement}

% FundRef data to be entered by SIAM
%<funding-group>
%<award-group>
%<funding-source>
%<named-content content-type="funder-name"> 
%</named-content> 
%<named-content content-type="funder-identifier"> 
%</named-content>
%</funding-source>
%<award-id> </award-id>
%</award-group>
%</funding-group>

\allowdisplaybreaks

\begin{document}

%\maketitle
\maketitle

    \renewcommand{\thefootnote}{\fnsymbol{footnote}}
    \footnotetext[2]{Department of Mathematical Sciences, Korea Advanced Institute of Science and Technology, Daejeon 34141, Republic of Korea ({hphong@kaist.ac.kr, jaekkim@kaist.ac.kr})}
    \footnotetext[3]{Biomedical Mathematics Group, Pioneer Research Center for Mathematical and Computational Sciences, Institute for Basic Science, Daejeon 34126, Republic of Korea}
    \footnotetext[4]{Institute of Mathematics, University of the Philippines Diliman, Quezon City 1101, Philippines ({bshernandez@up.edu.ph})}
    \footnotetext[5]{Department of Mathematics, Pohang University of Science Technology, Pohang 37673, Republic of Korea ({jinsukim@postech.ac.kr})}

% REQUIRED
\begin{abstract}
Long-term behaviors of biochemical systems are described by steady states in deterministic models and stationary distributions in stochastic models. Obtaining their analytic solutions can be done for limited cases, such as linear or finite-state systems, \revise{as it generally requires solving many coupled equations}. Interestingly, analytic solutions can be easily obtained when underlying networks have special topologies, called weak reversibility (WR) and zero deficiency (ZD)\revise{, and the kinetic law follows a generalized form of mass-action kinetics.} However, such desired topological conditions do not hold for the majority of cases. Thus, translating networks to have WR and ZD while preserving the original dynamics was proposed. Yet, this approach is limited because manually obtaining the desired network translation among the large number of candidates is challenging. Here, we prove necessary conditions for having WR and ZD after translation, and based on these conditions, we develop a user-friendly computational package, TOWARDZ, that automatically and efficiently identifies translated networks with WR and ZD. This allows us to quantitatively examine how likely it is to obtain WR and ZD after translation depending on the number of species and reactions. Importantly, we also describe how our package can be used to analytically derive steady states of deterministic models and stationary distributions of stochastic models. TOWARDZ provides an effective tool to analyze biochemical systems. 

\end{abstract}

% REQUIRED
\begin{keywords}
  stochastic reaction networks, deterministic reaction networks, network translation, stationary distribution, steady state, continuous-time Markov chain, irreducibility 
\end{keywords}

% REQUIRED
\begin{AMS}
  92B05, 92C42, 34A34, 60J27, 60J28, 60G10
\end{AMS}

\section{Introduction}
\label{section:introduction}

Biochemical systems are usually described with either deterministic models or stochastic models. Assuming the system is spatially well-mixed, deterministic models employ systems of ordinary differential equations (ODEs) to describe the dynamics of the concentrations of chemical species in a biochemical system \cite{FeinbergLecture,Feinberg2019}. On the other hand, continuous-time Markov chains (CTMCs) stochastically model the counts of chemical species taking into account the intrinsic noise of reactions in biochemical systems \cite{AndersonKurtz2015}. The long-term behaviors of the systems can be characterized with steady states of the ODEs or stationary distributions of the CTMCs, which provide key insights for analyzing the system \cite{Control:Aoki, kim2020absolutely, Control:Kumar,Control:Romano} or inferring the system parameters \cite{Bayesian:Kramer,Bayesian:Linden,Bayesian:Murakami}. 

A direct approach to find the analytic forms of steady states or stationary distributions is to solve a system of coupled algebraic equations. However, this approach is limited by the high dimensionality of the system. For instance, the stationary distribution can be analytically derived for special cases, such as simple linear or finite-state systems. Alternatively, numerical simulations can be used, but only when the values of the model parameters are specified. 
Furthermore, existence and uniqueness of steady states as well as stability are rarely justified through numerical simulations. Obtaining analytic solutions is also critical to derive a likelihood function for statistical inference \cite{Kim2013} and perform sensitivity analysis or accurate model reduction \cite{enciso2021accuracy, Hong2021:CommBio, Kim2017, Song2021}. 

To resolve these issues, novel approaches using structural features of biochemical systems to obtain the analytic form of their long-term behaviors have been proposed in chemical reaction network (CRN) theory \cite{anderson2011proof,  WR:ref:Anderson, anderson2020tier, Anderson2010,  anderson2018some, WR:ref:Boros, craciun2013persistence,craciun2011graph,Feinberg1972,Horn1972:CB,Horn1972Gen}. For a given biochemical system, CRNs describe the interactions among the chemical species as a directed graph: the nodes are \revise{complexes, which are the linear combinations of species such as $A+B$,} and the directed edges are reactions with inputs of source complexes and outputs of product complexes such as $A+B \to C$. Interestingly, the structural characteristics of CRNs can guarantee certain long-term behaviors of the associated dynamical systems. Specifically, if a CRN endowed with the mass-action kinetics has weak reversibility (WR) and zero deficiency (ZD), the deterministic model admits a unique locally asymptotically stable steady state \cite{Feinberg1972} and the stochastic model admits a unique stationary distribution \cite{Anderson2010}. Moreover, the former admits a monomial parametrization \cite{Muller2012,Muller2014} and the latter admits a product of Poisson distributions \cite{Anderson2010}.
Here, WR means that the network consists of only strongly connected components, and \revise{the deficiency loosely indicates the amount of linear independence among reactions in a CRN \cite{Shinar2011ZD}}.

However, the special structure of networks is rarely satisfied \cite{AndersonNguyen2022}. Thus, Johnston proposed to translate networks to have the structure while preserving the original dynamics \cite{Johnston2014}. This allows steady states to be derived analytically for a large class of CRNs. This method of network translation was extended to stochastic CRNs in our previous work \cite{Hong2021:CommBio}. \revise{Specifically, if a CRN can be translated to another CRN with WR and ZD then one can analytically derive a stationary distribution as long as the propensity functions of the translated network, which follow non-mass-action kinetics, satisfy certain factorization conditions.} While a given network can be translated to many networks, not all of them have WR and ZD. Thus, all the possible translations should be searched to obtain the networks with the desired structures. To facilitate this, previously, we developed a computational package \cite{Hong2021:CommBio}. However, the package is based on a trial-error type code, which limits its applicability to complex networks. In this paper, we propose a computational package TOWARDZ (TranslatiOn toward WeAkly Reversible and Deficiency Zero networks) for efficiently searching all the possible translated networks admitting WR and ZD. For maximum efficiency of the search algorithm, we derive necessary conditions for having translated networks with WR and ZD, and TOWARDZ uses filters based on these conditions. 

This allows us to study how likely it is that the key network structural conditions, WR and ZD, show up depending on the number of species and reactions by searching vast numbers of networks. In this way, \revise{we find that CRNs having WR and ZD with translations (i.e., CRNs with both WR and ZD after proper network translations) are as twice many as those without translations (i.e., CRNs with both WR and ZD before any network translation).} One unexpected result is that the proportion of CRNs having WR and ZD tends to be high when the number of reactions in a network is small. Furthermore, we illustrate how the efficient network translation search code facilitates deriving analytic steady states and stationary distributions and checking irreducibility of an associated CTMC. In particular, we establish a systematic way to analytically derive stationary distributions, which was done in a heuristic manner previously \cite{Hong2021:CommBio}.

The paper is organized as follows: In Section \ref{section:preliminaries}, we introduce relevant background and preliminary results. In Section \ref{section:main:results}, we provide our main theorems on translatability of a CRN to another CRN with WR and ZD, which are critical to develop the efficient computational search code, TOWARDZ. Then, using TOWARDZ, we investigate the prevalence of networks having WR and ZD with or without translation. In Section \ref{section:application}, we present how our code can be used to derive analytic steady states and stationary distributions. Additionally, we provide another important finding regarding network translation: translatability to a CRN with WR is a necessary condition for irreducibility of a CTMC. \revise{Finally, in} Section \ref{section:discussion}, we summarize our work and present avenues for future research.

\section{Preliminaries}\label{section:preliminaries}
In this section, we first provide essential and basic notions related to deterministic and stochastic models of CRNs. We also introduce the notion of network translation.

\subsection{Notation}
Let $\mathbb Z_{\ge 0}$, $\mathbb R_{\ge 0}$, and $\mathbb R_{> 0}$ denote the sets of non-negative integers, non-negative real numbers, and positive real numbers, respectively. For a vector $v$, the $i$-th component is denoted by $v_{[i]}$. For an indexed vector $\nu_k$, its $i$-th component is denoted by $\nu_{k,[i]}$. Additionally, for $w \in \mathbb Z^d_{\ge 0}$ and $u, v \in \mathbb R^d_{\ge 0}$, we define
\[w!:=\prod_{i=1}^d w_{[i]}!, \text{ } u^v:=\prod_{i=1}^d u_{[i]}^{v_{[i]}}, \text{ and }
\mathbf{1}_{u \ge v}:=
\begin{cases}
    1    & \text{if } u\ge v\\
    0    & \text{otherwise}
\end{cases}\]
with the conventions $0! = 1$, $0^0 = 1$, and $u\ge v$ whenever $u_{[i]}\ge v_{[i]}$ for all $i=1,\ldots,d$. 

\subsection{Chemical reaction networks}
In a biochemical system, chemical \textit{species} are constituents of the system such as $A$ and $B$. \emph{Complexes} are combinations of species such as \revise{$A+B$}. \emph{Reactions} are interactions between species such as \revise{$A+B\to 2B$}. Thus, a biochemical system can be described with a CRN, which is a directed graph whose nodes are complexes and edges are reactions. A more precise definition of a CRN is as follows.

\begin{definition}
A \emph{CRN} is a triple $(\mathcal S, \mathcal C, \mathcal R)$, where
 \begin{enumerate}
     \item $\Sp=\{S_1,S_2,\dots,S_d\}$ is the set of species,
     \item $\Cx$ is the set of complexes such that for each $\nu_k \in \Cx$,
   $\nu_k =\sum_{i=1}^d \nu_{k,[i]} S_i$, $\nu_{k,[i]} \in \mathbb Z_{\ge 0}$, and
\item $\Rx$ is the set of reactions which is a subset of $\Cx \times \Cx=\{(\nu_k,\nu'_k) : \nu_k, \nu'_k \in \Cx\}$.  
 \end{enumerate} 
\end{definition}
By abuse of notations, we let a complex $\nu_{k}$ denote a vector $(\nu_{k,[1]},\nu_{k,[2]},\ldots,\nu_{k,[d]})^\top$ as well as the linear combination of species $\nu_{k,[1]}S_1 + \cdots \nu_{k,[d]} S_d$.  Then a CRN can be described as a directed graph embedded in a $d-$dimensional Euclidean space, where \revise{each axis of this space corresponds to each species, the vertices of the graph are the complexes, and the edges of the graph are the reactions.} Hence, we often denote an ordered pair $(\nu_k,\nu'_k)$ by $\nu_k\to \nu'_k$. Moreover, we call $\nu_k$ and $\nu'_k$ the \emph{source complex} and the \emph{product complex} of the reaction $\nu_k\to \nu'_k$, respectively. We introduce more definitions for CRNs.

\begin{definition}\label{def:structural condi}
\begin{enumerate}
    \item A \emph{linkage class} is a connected component in a CRN when it is regarded as an undirected graph.
    \item A CRN is said to be \emph{weakly reversible} if each of its linkage classes is strongly connected; i.e., if there is a sequence of reactions from a complex $y$ to another complex $y'$, then there must exist a sequence of reactions from $y'$ to $y$.
    %we mentioned the stoich compatibility class in the deficiency zero theorem
    \item The \emph{stoichiometric subspace} for a CRN is defined as
    $$S:={\rm{span}}\{\nu'_k-\nu_k:\nu_k \to \nu'_k \in \mathcal{R}\}.$$ The vector $\nu'_k-\nu_k$ associated with a reaction $\nu_k\to \nu'_k$ is called a \emph{stoichiometric vector}. The matrix whose column vectors are the stoichiometric vectors is called a \emph{stoichiometric matrix}.
    \item \revise{For $a \in \mathbb{R}^d_{>0}$, the \emph{stoichiometric compatibility class} containing $a$ is $a+S$.} In other words, the stoichiometric compatibility class containing $a$ is the maximal set that can be reached by the deterministic system which started from $a$.
    \item The \emph{deficiency} of a CRN is defined as $\delta:=|\Cx|-l-s$ where $|\Cx|$, $l$, and $s$ are the number of complexes, the number of linkage classes, and the dimension of the stoichiometric subspace, respectively.
    \item The \emph{order} of a reaction $\nu_k \to \nu'_k$ is $\nu_{k,[1]} + \cdots + \nu_{k,[d]}$. Reactions of orders one and two are called \emph{monomolecular} and \emph{bimolecular} reactions, respectively. 
    %is the dimension of the stoichiometric subspace.
\end{enumerate}
\end{definition}

\revise{For example, a CRN with three reactions: $0 \to A$, $A \to B$, $B \to 0$, has one linkage class because all the three complexes, $0, A,$ and $B$, are connected. The stoichiometric subspace is span$\{(1,0), (-1,1), (0,-1)\}= \mathbb{R}^2$. The stoichiometric matrix $T$ is given by 
\begin{equation*}
T = \begin{bmatrix}
1 & -1 & 0 \\
0 & 1 & -1 
\end{bmatrix}.
\end{equation*} 
This network is clearly weakly reversible, and the deificiency is zero because the number of complexes is three, the number of linkage classes is one, and the dimension of the stoichiometric subspace is two. The orders of the three reactions, $0 \to A$, $A \to B$, $B \to 0$, are 0, 1, and 1, respectively.}

For a CRN, we often deterministically model the concentrations of the species in the CRN by using a system of ODEs such that 
 \begin{align}\label{eq:det system}
     \frac{d{x}(t)}{dt}=\sum_k\mathcal K_k({x}(t))(\nu_k'-\nu_k),
 \end{align}
 where the $i$-th coordinate of ${x}(t)=(x_{[1]}(t),x_{[2]}(t),\dots,x_{[d]}(t))^\top$ represents the concentration of species $S_i$, and $\mathcal K_k: \mathbb R^d_{\ge 0} \to \mathbb R_{\ge 0}$ is a \emph{rate function} which indicates the rate of the reaction $\nu_k \to \nu'_k$. A stationary solution of the ODEs is called a {\it steady state} or an {\it equilibrium} of the system.  For the rate function $\mathcal K_k$, one of the most typical choices is the so-called \emph{mass-action kinetics} defined as 
 \begin{align}\label{eq:det mass}
     \mathcal K_k(x)=\kappa_kx^{\nu_k},
 \end{align}
 where $\kappa_k$ is the rate constant associated with $\nu_k\to \nu'_k$. 

On the other hand, to model intrinsic noise of reactions, we use a CTMC. For a CRN $(\Sp,\Cx,\Rx)$, we define a CTMC $X$ whose transition rates are given by
\begin{align*}
    P(X(t+\Delta t)=\mathbf{n}+\zeta \ | \ X(t)=\mathbf{n})=\sum_{k:\nu_k'-\nu_k=\zeta} \lambda_k(\mathbf{n})\Delta t + o(\Delta t) \quad \text{for each $\zeta \in \mathbb Z^d$,}
\end{align*}
where the $i$-th coordinate of $X(t)=(X_{[1]}(t),X_{[2]}(t),\dots,X_{[d]}(t))^\top$ represents the copy number of species $S_i$, and $\lambda_k:\mathbb Z^d_{\ge 0}\to \mathbb R_{\ge 0}$ is the \textit{propensity function} associated with the reaction $\nu_k \to \nu_k'$. Then the probability distribution $p(\mathbf{n},t)=P(X(t)=\mathbf{n})$ is governed by the \textit{chemical master equation} (CME) that describes the time-evolution of $p(\mathbf{n},t)$ such that for each state $\mathbf{n}$,
\begin{align}\label{eq:CME}
    \frac{dp(\mathbf{n},t)}{dt}=\sum_k\left( \lambda_k(\mathbf{n}-\nu_k'+\nu_k)p(\mathbf{n}-\nu_k'+\nu_k,t)-\lambda_k(\mathbf{n})p(\mathbf{n},t)\right).
\end{align}
A normalizable non-negative steady-state solution $\pi$ of the CME \eqref{eq:CME} is called a \textit{stationary distribution} of the associated CTMC $X$; that is, $\displaystyle \sum_{\mathbf n} \pi(\mathbf n)=1$ , and $\pi$ satisfies the following:
\begin{align*}
    0=\sum_k\left( \lambda_k(\mathbf{n}-\nu_k'+\nu_k)\pi(\mathbf{n}-\nu_k'+\nu_k)-\lambda_k(\mathbf{n})\pi(\mathbf{n})\right) \text{ for each state } \mathbf{n}. 
\end{align*}

As in the deterministic model shown above, we also typically use the \emph{mass-action kinetics} for $\lambda_k$ defined as 
\begin{align}\label{eq: stoch mass}
   \lambda_k(\mathbf{n})=\kappa_k\dfrac{\mathbf{n} !}{(\mathbf{n}-\nu_k)!}\mathbf{1}_{\mathbf{n} \ge  \nu_k}, 
\end{align}
which is proportional to the number of possible combinations of species that can constitute the source complex. 

When a CRN $\N$ is endowed with a set of the rate functions (or propensity functions) $K=\{\mathcal K_k(x) \text{ (or } \lambda_k(\mathbf{n})): \nu_k\to \nu'_k \in \mathcal R\}$, $\Nk$ indicates the CRN with the associated deterministic system given by \eqref{eq:det system} (or the stochastic system given by \eqref{eq:CME}). For the sake of simplicity, we use the term CRN to stand for both $\N$  and $\Nk$ throughout this paper. 
% CRN tided with either the deterministic system or the stochastic system under the rate constants in $\kappa$.

\subsection{The fundamental deficiency theorems and complex balancing}
\label{subsection:def:zero:thms}
The deficiency zero theorem for the deterministic modeling of CRNs is attributed to Horn, Jackson, and Feinberg and can be traced back to their papers that were published in 1972 \cite{Feinberg1972,Horn1972:CB,Horn1972Gen}. It provides a substantial amount of dynamical information for reaction systems that can be large and complex as long \revise{as} the deficiency of the underlying network is zero \cite{Feinberg2019}. We state the following \revise{previously known} theorem focusing on the mass-action kinetics:

\begin{theorem}\label{thm:def 0}
Suppose that a CRN has WR and ZD. Then, for any choice of rate constants, the associated deterministic system \eqref{eq:det system} under the deterministic mass-action kinetics defined in \eqref{eq:det mass} admits exactly one locally asymptotically stable positive steady state within each stoichiometric compatibility class.
\end{theorem}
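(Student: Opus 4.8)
The plan is to reduce the statement to the theory of \emph{complex balanced} equilibria, following the classical Horn--Jackson--Feinberg strategy. First I would rewrite the mass-action system \eqref{eq:det system} with rate functions \eqref{eq:det mass} in the factored form $\dot x = Y A_\kappa \Psi(x)$, where $Y$ is the $d \times |\Cx|$ matrix whose columns are the complex vectors $\nu_k$, the vector-valued map $\Psi(x)$ collects the monomials $x^{\nu_k}$ over all complexes, and $A_\kappa$ is the weighted graph Laplacian (kinetic matrix) of the reaction graph whose off-diagonal entries are the rate constants. A positive point $c$ is called \emph{complex balanced} if $A_\kappa \Psi(c) = 0$, meaning that at $c$ the total production rate of each complex equals its total consumption rate; this is strictly stronger than being a steady state, which only requires $Y A_\kappa \Psi(c) = 0$.

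The role of ZD is captured by the identity $\delta = \dim(\ker Y \cap \mathcal I)$, where $\mathcal I$ is the image of the incidence map of the reaction graph in complex space; this follows from $S = Y(\mathcal I)$ and a rank count. Since $\operatorname{im} A_\kappa \subseteq \mathcal I$ always holds, deficiency zero forces $\ker Y \cap \operatorname{im} A_\kappa = \{0\}$. Consequently, any positive steady state $x$ satisfies $A_\kappa \Psi(x) \in \ker Y \cap \operatorname{im} A_\kappa = \{0\}$, so every positive steady state is automatically complex balanced. The role of WR is to supply existence: because each linkage class is strongly connected, the Matrix--Tree theorem (equivalently a Perron--Frobenius argument applied blockwise to $A_\kappa$) produces a strictly positive vector in $\ker A_\kappa$, and combining this with the deficiency-zero dimension count shows that this kernel meets the monomial image $\{\Psi(x) : x > 0\}$, yielding an actual complex balanced equilibrium $c > 0$.

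Given a complex balanced $c$, uniqueness within each stoichiometric compatibility class follows from the monomial parametrization: the complex balanced equilibria are exactly $\{x > 0 : \log x - \log c \in S^\perp\}$, and a strict-convexity/transversality argument shows this set meets each class $c + S$ in exactly one point. For local asymptotic stability I would use the pseudo-Helmholtz Lyapunov function $V(x) = \sum_{i=1}^{d} \bigl( x_{[i]} \log(x_{[i]}/c_{[i]}) - x_{[i]} + c_{[i]} \bigr)$ relative to the equilibrium $c$ of that class. One checks that $V$ is strictly convex with a strict minimum at $c$, then computes $\dot V$ along solutions; using the complex balancing $A_\kappa \Psi(c) = 0$ to regroup the sum over complexes and the elementary inequality $\log t \le t - 1$ applied termwise, one obtains $\dot V \le 0$ with equality only at $c$, which gives strict Lyapunov stability and hence local asymptotic stability within the class.

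The step I expect to be the main obstacle is the existence of the complex balanced equilibrium, since this is precisely where weak reversibility and zero deficiency must cooperate: WR alone only guarantees a positive vector in $\ker A_\kappa$, and one must invoke the deficiency-zero dimension count to realize that vector (up to linkage-class-wise scaling) as $\Psi(c)$ for a genuine positive $c$. The descent computation $\dot V \le 0$ is the other delicate point, as it requires organizing the reaction sum so that the complex balancing relation can be substituted and the result recognized as a sum of nonnegative terms.
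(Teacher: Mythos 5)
Your proposal is correct and takes essentially the same route as the paper, which does not reprove this classical result but invokes the Horn--Jackson--Feinberg theory \cite{Feinberg1972,Horn1972:CB,Horn1972Gen}, explaining in Section~\ref{subsection:def:zero:thms} that WR and ZD guarantee a complex balanced equilibrium whose existence implies uniqueness within each stoichiometric compatibility class and local asymptotic stability. Your sketch---deficiency zero forcing $\ker Y \cap \operatorname{im} A_\kappa = \{0\}$ so that every positive steady state is complex balanced, the Matrix--Tree/dimension-count argument for existence, the parametrization $\{x>0 : \log x - \log c \in S^\perp\}$ for uniqueness, and the pseudo-Helmholtz Lyapunov function for stability---is precisely the standard argument underlying that citation, with the existence step correctly identified as the point where WR and ZD must cooperate.
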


The background of Theorem \ref{thm:def 0} is that WR and ZD guarantee the existence of a special type of positive steady state proposed in \cite{Horn1972Gen}; a \emph{complex balanced equilibrium} $\mathbf c \in \mathbb R^d_{> 0}$ of a CRN following the deterministic mass-action kinetics \eqref{eq:det system} is the steady state that satisfies 
\begin{align}\label{eq:cx balance}
    \sum_{k: \nu_k \to \nu'_k, \nu_k' = \nu} \kappa_k \mathbf c^{\nu_k} = \sum_{k: \nu_k \to \nu'_k, \nu_k = \nu} \kappa_k \mathbf c^{\nu_k} \text{ for any complex } \nu \in \mathcal C.
\end{align}
This is the steady state where the in- and out-flow at each complex are balanced. If a complex balanced equilibrium exists, then only one positive steady state exists within each stoichiometric compatibility class, and all the positive steady states are complex balanced and locally asymptotically stable \cite{Feinberg1972,Horn1972:CB,Horn1972Gen}. Moreover, the analytic solution of this complex balanced equilibrium can be expressed as a monomial parametrization, \revise{i.e., a monomial with free parameters as variables} \cite{Muller2012,Muller2014}.

As an analogous result to deterministic models, Anderson et al. \cite{Anderson2010} established a deficiency zero theorem for the stochastic counterparts. We state a version of \cite[Theorem 4.2]{Anderson2010}.
\begin{theorem}
Suppose that a CRN has WR and ZD. Then the stochastic system for the CRN under the stochastic mass-action kinetics \eqref{eq: stoch mass} with the rate constant $\kappa_k$ admits a product-form stationary distribution that is given by
\[
 \pi(\mathbf{n})= M\displaystyle \prod_{i=1}^d \frac{c_{[i]}^{n_{[i]}}}{n_{[i]}!},
\]
where $(c_{[1]}, \ldots, c_{[d]})$ is a complex balanced equilibrium of the deterministic system for the CRN under the deterministic mass-action kinetics \eqref{eq:det mass} with the rate constant $\kappa_k$ and $M>0$ is a normalizing constant.
\label{anderson:2010:theorem}
\end{theorem}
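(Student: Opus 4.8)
The plan is to verify directly that the proposed product-form measure solves the stationary chemical master equation, reducing the global balance at each state $\mathbf n$ to the complex balance condition \eqref{eq:cx balance}. The starting point is that WR and ZD guarantee the existence of a complex balanced equilibrium $\mathbf c \in \mathbb R^d_{>0}$, as recalled in the discussion following Theorem~\ref{thm:def 0}; this $\mathbf c$ is exactly the vector appearing in the asserted formula. Normalizability is immediate: summing the unnormalized weight $\prod_{i=1}^d c_{[i]}^{n_{[i]}}/n_{[i]}!$ over $\mathbb Z^d_{\ge 0}$ factorizes into $\prod_{i=1}^d e^{c_{[i]}} < \infty$, so a finite $M>0$ exists.

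The key algebraic step is to simplify the product of a mass-action propensity with the candidate distribution. Using \eqref{eq: stoch mass} and cancelling the $\mathbf n!$ factors, I expect to obtain
\begin{align}\label{eq:key-identity}
    \lambda_k(\mathbf n)\,\pi(\mathbf n)
    = \kappa_k\, \mathbf c^{\nu_k}\, \pi(\mathbf n - \nu_k)\, \mathbf 1_{\mathbf n \ge \nu_k},
\end{align}
since $\tfrac{\mathbf n!}{(\mathbf n - \nu_k)!}\prod_i \tfrac{c_{[i]}^{n_{[i]}}}{n_{[i]}!}$ collapses to $\mathbf c^{\nu_k}\prod_i \tfrac{c_{[i]}^{n_{[i]}-\nu_{k,[i]}}}{(n_{[i]}-\nu_{k,[i]})!}$ and the constant $M$ passes through. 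Applying \eqref{eq:key-identity} with $\mathbf n$ replaced by $\mathbf n - \nu_k' + \nu_k$ likewise turns the in-flow term into $\kappa_k \mathbf c^{\nu_k}\pi(\mathbf n - \nu_k')\,\mathbf 1_{\mathbf n \ge \nu_k'}$, where the shifted indicator $\mathbf 1_{\mathbf n - \nu_k' + \nu_k \ge \nu_k}$ rewrites as $\mathbf 1_{\mathbf n \ge \nu_k'}$.

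Substituting both expressions into the stationary CME and regrouping the reaction sum according to complexes is where the argument does its real work. Collecting the out-flow terms by their source complex and the in-flow terms by their product complex, the right-hand side becomes $\sum_{\nu \in \mathcal C} \pi(\mathbf n - \nu)\,\mathbf 1_{\mathbf n \ge \nu}\bigl(\sum_{k:\nu_k'=\nu}\kappa_k \mathbf c^{\nu_k} - \mathbf c^{\nu}\sum_{k:\nu_k=\nu}\kappa_k\bigr)$. The parenthesized factor is precisely the difference of the two sides of the complex balance equation \eqref{eq:cx balance} at the complex $\nu$, hence vanishes for every $\nu$. Therefore $\pi$ solves the stationary CME and, being normalizable, is a stationary distribution.

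The main obstacle is the middle bookkeeping step: the indicators $\mathbf 1_{\mathbf n \ge \nu_k}$ and $\mathbf 1_{\mathbf n \ge \nu_k'}$ must be tracked carefully so that, after the shift $\mathbf n \mapsto \mathbf n - \nu_k' + \nu_k$, the in- and out-flow contributions share the common factor $\pi(\mathbf n - \nu)\,\mathbf 1_{\mathbf n \ge \nu}$ needed to factor out and expose the complex balance structure. Once the identity \eqref{eq:key-identity} is in hand, the remainder is a purely combinatorial regrouping of the sum over reactions by complexes.
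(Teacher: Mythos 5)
Your verification is correct, and it is essentially the same argument as the one behind the paper's statement: the paper gives no proof of its own but cites \cite[Theorem 4.2]{Anderson2010}, whose standard proof is exactly this direct substitution of the product-form Poisson measure into the stationary CME, use of the identity $\lambda_k(\mathbf n)\pi(\mathbf n)=\kappa_k\mathbf c^{\nu_k}\pi(\mathbf n-\nu_k)\mathbf 1_{\mathbf n\ge\nu_k}$, and regrouping of the reaction sum by complexes so that each coefficient vanishes by the complex balance condition \eqref{eq:cx balance}, with existence of $\mathbf c$ supplied by WR and ZD via the deficiency zero theory. Note that your argument establishes that $\pi$ is a stationary distribution precisely in the paper's sense (a normalizable solution of the stationary CME), which is all the statement requires.
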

\begin{remark}\label{rmk:product form and irr}
\revise{If the associated CTMC $X$ is not irreducible, $\pi$ is the unique stationary distribution on each closed irreducible subset $\Gamma$. If the associated CTMC $X$ is irreducible (i.e., $X$ can jump to any state from an arbitrary initial state), then $\pi$ is the unique stationary distribution on $\mathbb{Z}_{\geq 0}^d$.} See more details in \cite[Theorem 4.2]{Anderson2010}. 
\end{remark}

\subsection{The method of network translation}\label{sec:define translation}
In 2014, Johnston \cite{Johnston2014} introduced the notion of {\it network translation}, which modifies the graphical properties of a CRN but preserves the stoichiometric vectors and allows the original source complexes to determine the kinetics\revise{, to maintain the deterministic dynamics of the CRN. The definition that he introduced \cite{Johnston2014,TJ2018:networktranslation} can be interpreted as adding a vector of linear combination $v$ of the species to both sides of a reaction $y \to y'$, i.e., $y+v\to y'+v$. In this way, we produce a new set of complexes, containing $y+v$ and $y'+v$, called stoichiometric complexes. Moreover, the set of the original source complexes, including $y$ (associated with the new source complex $y+v$), produces another set of complexes called ``kinetic complexes''. Thus, the network translation of a CRN is defined by Johnston in terms of a generalized CRN (GCRN), i.e., a directed graph together with two sets of complexes, the stoichiometric and the kinetic complexes (see the Supplementary Materials for the details of GCRN).} 
%Hence, it maintains the deterministic dynamics of the CRN. 

%\revise{Specifically, Johnston defined the network translation \cite{Johnston2014,TJ2018:networktranslation} in the following manner via the generalized chemical reaction networks (GCRNs), which are directed graphs where the vertices have two associated complexes called the stoichiometric and the kinetic complexes:}

%\revise{
%Let $\mathcal{N}$ be a CRN with complex set $\mathcal{C}$ and  source complex set $\mathcal{C}_S$. Consider a GCRN with kinetic complex set $\widetilde{\mathcal{C}}_S$, source complex set $\widetilde{\mathcal{C}}_S$, and map $h: \widetilde{\mathcal{C}}_S \to \widetilde{\mathcal{C}}$
%}

%\revise{Based on the work of Johnston, the process of network translation is basically adding a vector $v$ of a linear combination of the species to both sides of a reaction $y \to y'$, i.e., $y+v\to y'+v$.}

Recently, we extended the concept of the network translation to stochastic CRNs \cite{Hong2021:CommBio}. This proposed translation can be viewed in the following manner: (1) performing the structural translation \revise{of adding a linear combination of the species to both the source and product complexes of a reaction} that preserves the stoichiometric vectors, and (2) grouping the reactions with the same stoichiometric vectors and then adding up their propensities, or creating another reaction with the same stoichiometric vector of an existing reaction and then dividing the propensity for these two reactions. \revise{Hence, Johnston's definition of the network translation only covers (1) while our definition covers both (1) and (2). We added the second step because of its usefulness to stochastic CRNs, especially, merging reactions \cite{Hong2021:CommBio}. Merging and dividing reactions only regroup the associated kinetic terms in ODEs and CMEs but do not change these ODEs and CMEs, and hence, the deterministic and stochastic dynamics are both maintained.}
%Since there are different kinds of network translations, w

We precisely define the network translation that we will use throughout this paper.
%from Hong et al. \cite{Hong2021:CommBio}:

\begin{definition} \label{def:network:translation}
\revise{For a CRN $\Nk$, we call $\wNk$ a translation of $\Nk$ if either
\begin{enumerate}
    \item for the deterministic models, 
    \begin{equation}
    \sum_{k: \nu_k' - \nu_k = \zeta} \mathcal{K}_k(x) = \sum_{\tilde{k}: \tilde{\nu}_{\tilde{k}}' - \tilde{\nu}_{\tilde{k}} = \zeta} \tilde{\mathcal{K}}_{\tilde{k}} (x), \quad \text{for any $\zeta \in \mathbb{Z}^d$ and $x \in \mathbb{R}^{d}_{\geq 0}$, or}
    \label{eq:network:translation:sum:prop}
\end{equation}
\item  for the stochastic models,  
\begin{equation}
    \sum_{k: \nu_k' - \nu_k = \zeta} \lambda_k(\mathbf{n}) = \sum_{\tilde{k}: \tilde{\nu}_{\tilde{k}}' - \tilde{\nu}_{\tilde{k}} = \zeta} \tilde{\lambda}_{\tilde{k}}  (\mathbf{n}) \quad \text{for any $\zeta \in \mathbb{Z}^d$ and $\mathbf{n} \in \mathbb{Z}^{d}_{\geq 0}$},
    \label{eq:network:translation:sum:prop2}
\end{equation}
\end{enumerate}
where $\nu_k' - \nu_k$ and $\tilde{\nu}_{\tilde{k}}' - \tilde{\nu}_{\tilde{k}}$ are the stoichiometric vectors of the $k$-th and $\tilde k$-th reactions in $\Nk$ and $\wNk$, respectively.}
\end{definition}

From Definition \ref{def:network:translation}, the following three translations are allowed:
\begin{itemize}
    \item {\it shifting} a reaction to another one with the same stoichiometric vectors, i.e., adding the same linear combination of species to both the source and product complexes of the reaction, e.g., $A+B\to B+C$ can be shifted to $A+C\to 2C$ by adding $-B+C$, while the shifted reaction has the same rate (or propensity) function as the original one,
    \item {\it merging} reactions with the same stoichiometric vectors, which results in adding the associated rate (or propensity) functions of the reactions, e.g., $A\to B$ and $A+B\to 2B$ can be merged to $A\to B$, and
    \item {\it dividing} a reaction to different reactions with the same stoichiometric vectors,
    which results in splitting the associated rate (or propensity) functions to the divided reactions, i.e., we replicate the stoichiometric vectors and each associated reaction, e.g., $A+B\to 2B$ can be split into $A\to B$ and $2A+B\to A+2B$ where the rate (or propensity) function of the reaction $A+B \to 2B$ can be arbitrarily divided into two rate (or propensity) functions; one for $A \to B$ and the other for $2A+B \to A+2B$. This can be seen as the inverse of merging.
\end{itemize}
When we modify a CRN by shifting, merging, or dividing reactions, we call such action a translation.
\revise{Here, we prove} the following proposition that explains the motivation for the choice of the terminology `translation' in Definition \ref{def:network:translation}.

\begin{proposition}\label{prop:trans and kinetics}
For two deterministic (or stochastic) CRNs $\Nk$ and $\wNk$, if \eqref{eq:network:translation:sum:prop} (or \eqref{eq:network:translation:sum:prop2}) holds then $\Nk$ can be translated to $\wNk$ by shifting, merging, and dividing reactions. Conversely, if $\wNk$ is obtained by shifting, merging, or dividing reactions in $\Nk$, then \eqref{eq:network:translation:sum:prop} (or \eqref{eq:network:translation:sum:prop2}) holds.  
\end{proposition}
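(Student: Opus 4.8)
The plan is to exploit the fact that both the deterministic system \eqref{eq:det system} and the chemical master equation \eqref{eq:CME} depend on the kinetics only through the \emph{aggregated rate per stoichiometric vector}, namely the function $\zeta \mapsto \sum_{k:\nu_k'-\nu_k=\zeta}\mathcal{K}_k(x)$ (respectively $\zeta \mapsto \sum_{k:\nu_k'-\nu_k=\zeta}\lambda_k(\mathbf{n})$). Condition \eqref{eq:network:translation:sum:prop} (respectively \eqref{eq:network:translation:sum:prop2}) is precisely the assertion that this aggregated function coincides for $\Nk$ and $\wNk$. I will treat only the deterministic case, since the stochastic case is identical with $\lambda_k(\mathbf{n})$ in place of $\mathcal{K}_k(x)$.

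For the converse direction I would show that each of the three elementary operations leaves the aggregated function unchanged, so that any finite composition does as well. Shifting replaces a reaction by another one with the same stoichiometric vector and the same rate function, so the summand indexed by that $\zeta$ is untouched. Merging replaces two reactions sharing a stoichiometric vector $\zeta$ with rates $\mathcal{K}_a,\mathcal{K}_b$ by a single reaction with vector $\zeta$ and rate $\mathcal{K}_a+\mathcal{K}_b$, and dividing is its inverse; in both cases the total over the $\zeta$-summand is preserved. Since $\wNk$ is reached from $\Nk$ by a sequence of such operations, the aggregated functions agree and \eqref{eq:network:translation:sum:prop} follows.

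For the forward direction I would build an explicit sequence of operations realizing the passage $\Nk \to \wNk$. First, within $\Nk$, merge all reactions sharing a common stoichiometric vector into a single reaction, producing a canonical network $M$ with exactly one reaction per stoichiometric vector $\zeta$, whose rate is the aggregate $\sum_{k:\nu_k'-\nu_k=\zeta}\mathcal{K}_k$. Performing the same merging on $\wNk$ yields a network $\tilde M$. By \eqref{eq:network:translation:sum:prop}, $M$ and $\tilde M$ carry exactly the same stoichiometric vectors and the same aggregated rate function for each, and can differ only in the choice of source (and hence product) complexes. For each $\zeta$, I would shift the reaction $y_\zeta \to y_\zeta'$ of $M$ by the vector $v_\zeta := \tilde y_\zeta - y_\zeta$, where $\tilde y_\zeta \to \tilde y_\zeta'$ is the corresponding reaction of $\tilde M$; since $\tilde y_\zeta + \zeta = \tilde y_\zeta'$, this shift sends $y_\zeta \to y_\zeta'$ exactly to $\tilde y_\zeta \to \tilde y_\zeta'$ and is admissible because the target complexes are genuine complexes of $\wNk$, hence have non-negative integer entries. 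This transforms $M$ into $\tilde M$. Finally, I would reverse the initial merging of $\wNk$ by dividing each reaction of $\tilde M$ back into the reactions of $\wNk$ sharing that stoichiometric vector; this is a legitimate division because, again by \eqref{eq:network:translation:sum:prop}, the split rates $\tilde{\mathcal{K}}_{\tilde k}$ sum to the aggregate rate of the reaction being divided. Concatenating these steps gives the desired translation $\Nk \to M \to \tilde M \to \wNk$ using merging, shifting, and dividing.

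The step I expect to demand the most care is this forward construction, in particular verifying that the shift vectors $v_\zeta$ yield only valid complexes and that every merge can be undone by a division with the prescribed rate split — the latter being exactly where the hypothesis \eqref{eq:network:translation:sum:prop} enters. A minor edge case is a stoichiometric vector whose aggregated rate vanishes identically; for mass-action kinetics with positive rate constants this does not arise, and in general such a reaction can be treated as absent, so it does not affect the argument.
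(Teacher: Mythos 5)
Your proof is correct and follows essentially the same route as the paper: both arguments reduce $\Nk$ and $\wNk$ to a canonical network with exactly one reaction per stoichiometric vector carrying the aggregated rate, and then exploit that shifting is reversible and merging is undone by dividing to concatenate the two reductions. Your version merely makes explicit two steps the paper leaves implicit---the shift $M \to \tilde M$ aligning the reference complexes and the prescribed rate split when dividing back into $\wNk$---which is a harmless (indeed slightly more careful) elaboration of the same idea.
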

\begin{proof} Let us prove for a deterministic CRN with the rate functions $\mathcal{K}_k(x)$ for simplicity. Then the proof is exactly the same as in the stochastic case.

Note that shifting is reversible, and merging two reactions can be reversed by dividing. This implies that these three translations form equivalence classes among all the CRNs. Therefore, it suffices to show that both $\Nk$ and $\wNk$ can be translated to the same CRN. To generate such a CRN, we consider a reference reaction $\bar \nu_\zeta \to \bar \nu'_\zeta$ with $\bar \nu'_\zeta-\bar \nu_\zeta=\zeta$ for each $\zeta$ such that there exists at least one reaction $\nu_k \to \nu'_k$ with $\nu'_k-\nu_k=\zeta$. The simplest way to make such a $\bar \nu_\zeta \to \bar \nu'_\zeta$ is to \revise{choose one reaction for each of the distinct stoichiometric vectors}. Now, we set $(S, \bar{\mathcal C}, \bar{\mathcal R}, \bar{K})$ such that 
\begin{align*}
&\bar{\mathcal R}=\{\bar \nu_\zeta \to  \bar \nu'_\zeta : \zeta \in \mathbb{Z}^d \text{ such that there exists $\nu_k\to \nu'_k \in \mathcal R$ with $\nu'_k-\nu_k=\zeta$}\},\\
&\bar{\mathcal C}=\{\bar \nu_\zeta,  \bar \nu'_\zeta : \bar \nu_\zeta \to \bar \nu'_\zeta \in \bar{\mathcal R}\},
 \quad \text{and}  \quad \bar{K}=\{\mathcal{K}_\zeta(x) = \sum_{k:\nu'_k-\nu_k=\zeta}\mathcal{K}_k(x)\}.
\end{align*}
Then $(S, \bar{\mathcal C}, \bar{\mathcal R}, \bar{K})$ can be obtained by merging all the reactions $\nu_k \to \nu'_k$ to $\bar \nu_\zeta \to \bar \nu'_\zeta$ if $\nu'_k-\nu_k=\zeta$. \revise{For} each $\zeta$, there exists $\tilde \nu_k \to \tilde \nu'_k \in \widetilde{\mathcal R}$ such that $\tilde \nu'_k-\tilde \nu_k=\zeta$ if and only if there exists $\nu_k \to \nu'_k \in \mathcal R$ such that $\nu'_k-\nu_k=\zeta$. Hence via the same aforementioned procedure, $(S, \bar{\mathcal C}, \bar{\mathcal R}, \bar{K})$ can be obtained from $\wNk$. 

Finally, since shifting, merging, and dividing do not alter the stoichiometric vectors and the resulting kinetics, a translated network and the original network have the same deterministic system. Hence \eqref{eq:network:translation:sum:prop} holds if $\wNk$ is obtained by translating $\Nk$. 
\end{proof}

\revise{
\begin{remark}
While network translation preserves the dynamics, it changes the kinetic law. For example, two reactions $0 \to A$ and $A \to 2A$ with the stochastic mass-action kinetics can be merged into a single reaction $0 \to A$ as they have the same stoichiometric vector. The propensity function of the merged reaction is then a linear function of the number of $A$ while the source complex is $0$, indicating that it follows a non-mass-action kinetics. Therefore, Theorems 2.3 and 2.4 which require the mass-action kinetics cannot be directly used in this case. On the other hand, in Theorems 4.1 and SM4.2, this limitation is relaxed so that explicit forms of stationary distributions and deterministic steady states, respectively, for translated networks even with non-mass-action kinetics can be obtained under certain conditions.
\end{remark}
}

\section{Necessary conditions to have WR and ZD after network translation and code implementation}
% 1) "Translatability to a CRN having WR and ZD and implementing code (name)"
% 2) "Necessary conditions to have WR and ZD after translation and implementing code"
\label{section:main:results}

%The body text is in 12 point normal Times New Roman, 
%the line space is at least 15 point.
Previously, we developed a code that automatically performs the network translation for a given network \cite{Hong2021:CommBio}. However, the previous code basically generates all the possible translated networks and then examines the WR and ZD of each network even if there is no possibility to admit WR and ZD after translation. As this slows down the speed of the code, here we propose an efficient code, TOWARDZ, by using pre- and post-translational filters. In this way, we are able to exclude networks with no possibility to be translated to a network having WR and ZD. Importantly, with the new code, we provide the first quantitative description for the prevalence of networks having WR, ZD, or both. 

\subsection{Code Summary}
The steps of TOWARDZ (the GitHub link will be made public and updated here when the manuscript is accepted) are enumerated as follows:
\begin{itemize}
    \item {\bf{Step 1:}} {\it{Perform the  pre-translational filter.}}\\
    -- In this step, the code filters a given network if it cannot be translated to a network with WR and ZD. This is done using the result on translatability of a network admitting WR (Theorem \ref{theorem:WR:translation}) and ZD (Theorems \ref{ZD:1} and \ref{ZD:2}).

    \item {\bf{Step 2:}} {\it{Generate candidates of translated networks.}}\\
    -- After performing the filter in the previous step, the code generates all possible translated network candidates for the original network under a given maximum order of reactions. 
    
    \item {\bf{Step 3:}} {\it{Perform the post-translational filter.}}\\
    -- Among all translated networks obtained in the previous step, we can exclude a number of translated networks  that cannot have ZD based on the inequality condition in Theorem \ref{CRN:deficiency:zero}. It requires only counting the number of complexes instead of performing a graph search algorithm \cite{Tarjan1972depth}, which is slower than just counting the number of complexes.
    
    \item {\bf{Step 4:}} {\it{Check directly the network structure of the remaining candidates.}}\\
    -- The code then checks whether each of the networks that passed the pre- and post-translation filters has WR and ZD. For checking WR and counting the number of linkage classes, which is necessary to compute a deficiency, the graph search algorithm is performed.
\end{itemize}

\subsection{{The pre-translational filter}}
\label{pre:translational:filter}
Before generating translated networks, we test whether a given network can be translated into a network with WR and ZD. To do this, we first derive a necessary and sufficient condition for a network to have a translation to a network with WR.\\
\noindent
\begin{theorem}
    Let $T$ be the stoichiometric matrix of a given CRN $(\mathcal{S,C,R})$ defined as in Definition \ref{def:structural condi}. Then, the following are equivalent.
    \begin{itemize}
        \item[i.] The given network $(\mathcal{S,C,R})$ can be translated to a CRN with WR.
        \item[ii.] $\ker(T) \cap \mathbb{R}_{ > 0}^r \ne \varnothing$ where $r = |\mathcal{R}| \text{ and } \mathbb{R}_{ > 0}^r=\{(a_{[1]},\ldots,a_{[r]}):a_{[i]}>0 \ \forall i=1,\ldots,r\}$.
        \item[iii.] The following linear programming problem has a positive solution: 
        $$\max_{\mathbf{x}}\min_{i}x_{[i]} 
        \text{ subject to } T\mathbf{x}=\mathbf{0} \text{ and }
         \mathbf{0}\leq \mathbf{x} \leq \mathbf{1}.$$
    \end{itemize}
    \label{theorem:WR:translation}
\end{theorem}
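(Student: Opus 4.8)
The plan is to prove the two equivalences separately, treating (ii) $\Leftrightarrow$ (iii) as a routine linear-programming reformulation and putting the real work into (i) $\Leftrightarrow$ (ii), which is what connects the combinatorial notion of weak reversibility to the algebraic condition $\ker(T)\cap\mathbb{R}^r_{>0}\neq\varnothing$. For (ii) $\Leftrightarrow$ (iii) I would first note that the feasible set $\{\mathbf{x}:T\mathbf{x}=\mathbf{0},\ \mathbf{0}\le\mathbf{x}\le\mathbf{1}\}$ is compact and the objective $\min_i x_{[i]}$ continuous, so the maximum is attained (and the program is a genuine LP after introducing a scalar $t\le x_{[i]}$). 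If (iii) holds, a maximizer satisfies $\min_i x_{[i]}>0$, hence lies in $\ker(T)\cap\mathbb{R}^r_{>0}$, giving (ii); conversely, given $\mathbf{x}^*\in\ker(T)\cap\mathbb{R}^r_{>0}$, rescaling by $1/\max_i x^*_{[i]}$ produces a feasible point in $[0,1]^r$ that stays in $\ker(T)$ and strictly positive, so the optimal value is positive.

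For (i) $\Rightarrow$ (ii) I would first record a standard graph-theoretic fact about flows on the reaction graph. Writing $B$ for the incidence matrix of the directed reaction graph and $Y$ for the matrix whose columns are the complexes, one has $T=YB$. A network is weakly reversible if and only if it carries a strictly positive circulation, i.e.\ a vector $w>0$ with $Bw=\mathbf{0}$, because a positive circulation exists exactly when every edge lies on a directed cycle, which is equivalent to each linkage class being strongly connected. Since $Bw=\mathbf{0}$ forces $Tw=YBw=\mathbf{0}$, a weakly reversible network always carries a strictly positive vector in the kernel of its own stoichiometric matrix. I would then check that existence of such a positive kernel vector is invariant under the three translation moves of Proposition \ref{prop:trans and kinetics}: shifting leaves every column of $T$ unchanged; merging two reactions sharing a stoichiometric vector $\zeta$ replaces two equal columns by one, and a positive flux transfers between the two pictures by splitting (resp.\ adding) the corresponding coordinates while preserving $\sum_k x_k(\nu'_k-\nu_k)$; dividing is the inverse. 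Applying this invariance to the translation from the given network to its weakly reversible translate pushes the positive kernel vector of the translate's stoichiometric matrix back to a positive kernel vector of $T$, which is exactly (ii).

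The converse (ii) $\Rightarrow$ (i) is the substantive construction. Starting from some $\mathbf{x}^*\in\ker(T)\cap\mathbb{R}^r_{>0}$, I would first pass to a strictly positive \emph{rational} vector in $\ker(T)$: this is possible because $T$ is integer-valued, so $\ker(T)$ is a rational subspace whose rational points are dense, while $\ker(T)\cap\mathbb{R}^r_{>0}$ is relatively open and nonempty. Clearing denominators yields positive integers $p_1,\dots,p_r$ with $\sum_k p_k(\nu'_k-\nu_k)=\mathbf{0}$. Using the dividing move, I replace the $k$-th reaction by $p_k$ copies, each carrying the stoichiometric vector $\nu'_k-\nu_k$; listing all copies gives integer vectors $w_1,\dots,w_N$ with $\sum_j w_j=\mathbf{0}$. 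I then arrange them into a single closed walk: fix an ordering, set $c_j=c_0+\sum_{l\le j}w_l$ with $c_0=M\mathbf{1}$ for $M$ large enough that every partial sum lies in $\mathbb{Z}^d_{\ge 0}$, and install the reactions $c_{j-1}\to c_j$. The identity $\sum_j w_j=\mathbf{0}$ forces $c_N=c_0$, so the walk closes; each installed reaction is a \emph{shift} of the corresponding copy, so the whole construction is a legitimate translation, and a closed directed walk places every edge on a cycle, making the resulting network weakly reversible.

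The hard part is this last construction, and two points need care. First, a single cycle realizing the balance $\sum_j w_j=\mathbf{0}$ forces integer multiplicities, which is precisely why the rationality/denseness step is needed rather than working directly with the given real flux. Second, the complexes $c_j$ must be genuine nonnegative integer complexes, which is exactly what the large common offset $M\mathbf{1}$ guarantees. Any coincidences among the $c_j$, or two copies landing on the same source–product pair, are harmless: repeated vertices still leave the closed walk strongly connected, and duplicated reactions may simply be re-merged, and neither disturbs weak reversibility nor the translation relation.
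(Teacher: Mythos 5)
Your proposal is correct, and one half of it takes a genuinely different route from the paper. The direction $(ii.\Rightarrow i.)$ is essentially the paper's own argument: you rationalize a positive kernel vector inside $\ker(T)$ (you argue via density of rational points in the rational subspace $\ker(T)$, while the paper picks an integer basis of $\ker(T)$ and approximates the real coefficients by rationals --- the same mechanism), clear denominators, use dividing to create integer multiplicities, concatenate the stoichiometric vectors into a closed walk of partial sums, and offset by $M\mathbf{1}$ to stay in $\mathbb{Z}^d_{\ge 0}$; you in fact go slightly beyond the paper by explicitly disposing of repeated vertices and duplicated source--product pairs, which the paper leaves implicit. Where you genuinely differ is $(i.\Rightarrow ii.)$: the paper works directly on the weakly reversible translate, using for each stoichiometric vector $\zeta_j$ a return path to write $-\zeta_j=\sum_{i=1}^r a_{ij}\zeta_i$ with nonnegative integers $a_{ij}$ (this exploits the fact that every reaction of the translate has its stoichiometric vector among the original $\zeta_i$'s), packaging these identities as $T(A+I)=\mathbf{0}$ and summing the columns of $A+I$ to exhibit a kernel vector with every entry at least $1$. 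You instead factor the translate's stoichiometric matrix as $\widetilde T=YB$ with $B$ the incidence matrix, invoke the standard equivalence between weak reversibility and the existence of a strictly positive circulation $w>0$ with $Bw=\mathbf{0}$ (obtained by covering every edge with directed cycles), and then transport the resulting positive kernel vector back to the original network via a separate invariance lemma --- shifting fixes the columns of $T$, while merging and dividing add or split coordinates attached to equal columns --- justified through Proposition~\ref{prop:trans and kinetics}. Both arguments are sound; the paper's is more self-contained and avoids the invariance step because the kernel vector is read off in the original coordinates from the start, whereas yours modularizes the proof into two reusable facts (the circulation characterization of WR, and translation-invariance of positive-kernel existence) at the modest cost of checking the transfer under merging, where the ambient dimension $r$ changes. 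Finally, your treatment of $(ii.\Leftrightarrow iii.)$ --- compactness and attainment of the maximum, plus rescaling a positive kernel vector into $[0,1]^r$ --- is a careful expansion of what the paper dismisses with ``these are clearly equivalent,'' and is exactly the content that remark compresses.
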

\begin{proof}
    $(i. \Rightarrow ii.)$
    Let $\zeta_i$ be a stoichiometric vector for $i = 1, \ldots, r$, and then $T = \begin{bmatrix}
        \zeta_1 &  \dots & \zeta_r \\
    \end{bmatrix}
    $. For each $\zeta_j$, since network translation preserves the stoichiometric vectors, in any translated CRN, there exists a reaction of the form $\nu_j\to \nu_j+\zeta_j$. 
    If a translated network has WR, then by the definition of WR (Definition \ref{def:structural condi}) a sequence of reactions that starts at $\nu_j$ and arrives at $\nu_j$ exists as described below:
    \begin{center}
        \begin{tikzpicture}
            \node (A) at (0,0) {$\cdots$}; %lowerleft
            \node (B) at (0,1.5) {$\bullet$}; %upperleft
            \node (C) at (3,1.5) {$\bullet$}; %upperright
            \node (D) at (3,0) {$\bullet$}; %lowerright
            \node (E) at (0,1.75) {$\nu_j$}; %upperleftlabel
            \node (F) at (3,1.75) {$\nu_j + \zeta_j$}; %upperrightlabel
            
            \draw (A) edge [->] (B);
            \draw (B) edge [->] (C);
            \draw (C) edge [->] (D);
            \draw (D) edge [->] (A);
            %edge [->] (NM)
            %edge [->,dashed,out=180,in=120] (EE)
            %(Delta) edge [->] (NM)
            %edge [->] (EE)
            %(EE) edge [->] (NM);   
        \end{tikzpicture}.
    \end{center}
    Hence $-\zeta_j = \sum\limits_{i = 1}^r {{a_{ij}}{\zeta_i}}$ for some $a_{ij} \in \mathbb{N}$. This implies that,
    \begin{equation}\label{eq:stoi_reverse}
        - \left[ {{\zeta_j}} \right] = \left[ {{\zeta_1}} \right]\left( {{a_{1j}}} \right) + \ldots + \left[ {{\zeta_r}} \right]\left( {{a_{rj}}} \right)
        = {\begin{bmatrix}
                {{\zeta_{1,[1]}}}\\
                {\vdots}\\
                {{\zeta_{1,[d]}}}
        \end{bmatrix}} \left( {{a_{1j}}} \right) + ... +
        {\begin{bmatrix}
                {{\zeta_{r,[1]}}}\\
                {\vdots}\\
                {{\zeta_{r,[d]}}}
        \end{bmatrix}}
        \left( {{a_{rj}}} \right)
        = {\begin{bmatrix}
                {\sum\limits_{\ell = 1}^r {{\zeta_{\ell,[1]}}{a_{\ell j}}} }\\
                {\vdots}\\
                {\sum\limits_{\ell = 1}^r {{\zeta_{\ell, [d]}}{a_{\ell j}}} }
        \end{bmatrix}}
    \end{equation}
By combining the above equation for all $j=1,\ldots,r$, we have
    \begin{align*}
        &
        -T = - {\begin{bmatrix}
                {{\zeta_1}}&{...}&{{\zeta_r}}
        \end{bmatrix}}  = {\begin{bmatrix}
                {\sum\limits_{\ell = 1}^r {{\zeta_{\ell,[1]}}{a_{\ell1}}} }&{\ldots}&{\sum\limits_{\ell = 1}^r {{\zeta_{\ell,[1]}}{a_{\ell r}}} }\\
                {\vdots}&{\ddots}&{\vdots}\\
                {\sum\limits_{\ell = 1}^r {{\zeta_{\ell, [d]}}{a_{\ell 1}}} }&{\ldots}&{\sum\limits_{\ell = 1}^r {{\zeta_{\ell, [d]}}{a_{\ell r}}} }
        \end{bmatrix}}=TA\\
        &\Longleftrightarrow T\left( A + I \right) = \mathbf{0} \text{ where $A$ is a $r \times r$ matrix whose $(i,j)$ entry is $a_{ij}$}.
    \end{align*}
%   Thus, there exists a matrix with nonnegative entries $A$ such that $T\times (A+I)=0$.
    Then, every column of the matrix
    $$A+I=\begin{bmatrix}
        {{a_{11}} + 1}&{{a_{12}}}&{\ldots}&{{a_{1r}}}\\
        {{a_{21}}}&{{a_{22}} + 1}&{\ldots}&{a_{2r}}\\
        {\vdots}&{\vdots}&{\ddots}&{\vdots}\\
        {{a_{r1}}}&{{a_{r2}}}&{\ldots}&{{a_{rr}} + 1}
    \end{bmatrix}$$
    belongs to ${\rm{ker}}(T)$, i.e.,
    $$\begin{bmatrix}
        {{a_{11}} + 1}\\
        {{a_{21}}}\\
        {\vdots}\\
        {{a_{r1}}}
    \end{bmatrix}, \ldots, \begin{bmatrix}
        {{a_{1r}}}\\
        {{a_{2r}}}\\
        {\vdots}\\
        {{a_{rr}} + 1}
    \end{bmatrix} \in \ker \left( T \right).$$
    Since ${\rm{ker}}(T)$ is a subspace of $\mathbb{R}^d$,
    $$\begin{bmatrix}
        {{a_{11}} + ... + {a_{1r}} + 1}\\
        {{a_{21}} + ... + {a_{2r}} + 1}\\
        {...}\\
        {{a_{r1}} + ... + {a_{rr}} + 1}
    \end{bmatrix} \in \ker \left( T \right).$$ Each entry of this vector is at least 1, and so $\ker(T) \cap \mathbb{R}_{ > 0}^r \ne \varnothing$.\\

$(ii. \Rightarrow i.)$ 
First we find a positive integer-valued vector in  ${\rm{ker}}(T) \cap \mathbb{R}_{ > 0}^r$. Let $b$ be a vector in ${\rm{ker}}(T) \cap \mathbb{R}_{ > 0}^r$. Since each element of $T$ is an integer, we can choose the basis of ${\rm{ker}}(T)$ with the integer-valued vectors  $\{\gamma_1, \ldots, \gamma_t \}$ where $t = \dim({\rm{ker}}(T))$. Then $b = \beta_1\gamma_1 +  \cdots + \beta_t \gamma_t$ for some $\beta_i \in \mathbb{R}$. By closely approximating $\beta_i$'s with rational numbers $\dfrac{u_i}{v_i}$'s for $u_i \in \mathbb Z , v_i \in \mathbb Z_{>0}$, we can find $\tilde b = \dfrac{u_1}{v_1}\gamma_1 +  \cdots + \dfrac{u_t}{v_t} \gamma_t \in \text{ker}(T)\cap \mathbb R_{>0}^r$. Then $a:=(v_1\times v_2 \times \cdots \times v_t )\tilde b$ is the desired positive integer-valued vector in ${\rm{ker}}(T) \cap \mathbb{R}_{ > 0}^r$. 

Since $a \in {\rm{ker}}(T)$, $a_1\zeta_1 + \cdots + a_r\zeta_r=\mathbf{0}$ where $\zeta_i$'s are the stoichiometric vectors. Based on this equality we can form a closed cycle since \textit{dividing} a reaction can generate arbitrary many copies of the reaction. Specifically, using $a_1$ copies of the vector $\zeta_1$, $a_2$ copies of the vectors $\zeta_2$, $\ldots$, and $a_r$ copies of $\zeta_r$, we can construct a closed cycle: $\mathbf{0} \rightarrow \zeta_1 \rightarrow \cdots \rightarrow a_1\zeta_1 \rightarrow a_1\zeta_1+\zeta_2 \rightarrow \cdots  \rightarrow a_1\zeta_1+a_2\zeta_2 \rightarrow \cdots \rightarrow \sum_{i=1}^r a_i\zeta_i = \mathbf{0}.$ This closed cycle may reach a vector with a negative entry. This situation can be simply resolved by adding a vector with large enough entries to all nodes in the cycle. Each stoichiometric vector $\zeta_i$ appears at least once in the cycle as $a_i$ is a positive integer. Therefore, this cycle is a weakly reversible translated network of the given network as itself.

\revise{($ii \Leftrightarrow iii$) These are clearly equivalent.}

% $(ii. \Rightarrow iii.)$ 
% There exists $\mathbf{x}=(x_{[1]},\ldots,x_{[r]})$ such that $T\mathbf{x} = \mathbf{0}$ and $x_i > 0$ for all $i = 1,\ldots, r$. Let $\tilde{\mathbf{x}} = (\tilde{x}_{[1]},\ldots,\tilde{x}_{[r]}) = \frac{1}{\max_i x_{[i]}}\mathbf{x}$. Then $0\leq \tilde{x}_{[i]} \leq1$ for all $i = 1, \ldots,r$. Since $\tilde{\mathbf{x}}$ satisfies $T\tilde{\mathbf{x}} =\mathbf{0}$ and $\mathbf{0} \leq \tilde{\mathbf{x}} \leq \mathbf{1},$ a solution of the linear programming problem is greater than or equal to $\min_i \tilde{x}_{[i]}$, which is a positive real number. Therefore, the solution of the linear programming problem is positive.

% $(iii. \Rightarrow ii.)$ 
% The existence of a positive solution for this problem implies that there exists $\mathbf{w} = (w_{[1]}, \ldots, w_{[r]})$ such that $T\mathbf{w} = \mathbf{0}$ and $0<w_{[i]}\leq 1$ for all $i = 1,\ldots,r$. Then the $\mathbf{w}$ is an element of the set ${\rm{ker}}(T) \cap \mathbb{R}_{ > 0}^r$.
\end{proof}

\noindent
Next, we derive the necessary conditions for having a translated network with ZD. Recall that the deficiency of a network is given by $\delta = |\Cx| - l - s$, where $|\Cx|$, $l$, and $s$ are the number of complexes, the number of linkage classes, and the dimension of the stoichiometric subspace, respectively. Note that $s$ is a translation invariant index since network translation preserves stoichiometric vectors. This motivated us to establish a condition on $|\Cx|$ for translatability to a CRN admitting ZD. We first state a \revise{previously known} result in \cite{AndersonNguyen2022, Hong2021:CommBio}. 

\begin{theorem}
    Let $|\Cx|$ and $s$ be the number of complexes and the dimension of the subspace spanned by the stoichiometric vectors, respectively, for a given CRN. If the deficiency of the CRN is zero, then $s+1 \le |\Cx| \le 2s$.
    \label{CRN:deficiency:zero}
\end{theorem}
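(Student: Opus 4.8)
The plan is to work directly from the definition $\delta = |\Cx| - l - s$, where $l$ denotes the number of linkage classes. Since deficiency zero means $|\Cx| = l + s$, the lower bound $|\Cx| \ge s+1$ is immediate: any nonempty network has at least one linkage class, so $l \ge 1$. Thus essentially all the content lies in the upper bound, which (again using $|\Cx| = l+s$) is equivalent to the inequality $l \le s$ --- the number of linkage classes cannot exceed the dimension of the stoichiometric subspace once the deficiency vanishes.

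To prove $l \le s$, I would decompose the deficiency across linkage classes. Enumerate the linkage classes $L_1,\dots,L_l$; let $n_i$ be the number of complexes in $L_i$, let $S_i$ be the span of the stoichiometric vectors of the reactions within $L_i$, and set $s_i = \dim S_i$ and $\delta_i = n_i - 1 - s_i$ (the deficiency of $L_i$ viewed as a one-linkage-class network). Two elementary observations drive the argument. First, because $L_i$ is connected, every stoichiometric vector $\nu'-\nu$ with $\nu,\nu' \in L_i$ lies in the span of the differences of the $n_i$ complexes from a fixed one, so $s_i \le n_i - 1$ and hence $\delta_i \ge 0$. Second, since the full stoichiometric subspace satisfies $S = \sum_i S_i$, subadditivity of dimension gives $s = \dim\bigl(\sum_i S_i\bigr) \le \sum_i s_i$.

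Combining these, a direct count yields $\delta = \sum_i \delta_i + \bigl(\sum_i s_i - s\bigr)$, a sum of two nonnegative quantities. Hence deficiency zero forces both to vanish; in particular $\delta_i = 0$, i.e. $s_i = n_i - 1$, for every linkage class. Now each linkage class contains at least one reaction between two distinct complexes, so $n_i \ge 2$ and therefore $s_i = n_i - 1 \ge 1$. Summing over the $l$ linkage classes gives $s = \sum_i s_i \ge l$, which is exactly $l \le s$ and completes the upper bound $|\Cx| = l + s \le 2s$.

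The main obstacle I anticipate is justifying the decomposition cleanly --- specifically the subadditivity step $s \le \sum_i s_i$ together with the identity expressing $\delta$ as $\sum_i \delta_i$ plus the nonnegative defect $\sum_i s_i - s$, since this is precisely what lets deficiency zero propagate down to each linkage class. A secondary point that must be stated explicitly is the convention that reactions have distinct source and product complexes (no trivial self-loops), as this is what guarantees $n_i \ge 2$ and thus $s_i \ge 1$; without it a degenerate singleton linkage class could in principle violate $l \le s$.
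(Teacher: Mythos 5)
Your proof is correct; note, however, that the paper itself gives no proof of this theorem --- it is stated as a previously known result with citations to \cite{AndersonNguyen2022, Hong2021:CommBio} --- so there is no in-paper argument to compare against, and your decomposition is essentially the standard one from that cited literature. All the steps check out: $S=\sum_i S_i$ holds because both endpoints of any reaction lie in a single linkage class; $s_i\le n_i-1$ holds since every reaction vector of $L_i$ can be written as $(\nu'-\nu_1)-(\nu-\nu_1)$ for a fixed reference complex $\nu_1\in L_i$; and the identity $\delta=\sum_i\delta_i+\bigl(\sum_i s_i-s\bigr)$ follows from $|\Cx|=\sum_i n_i$ together with subadditivity of dimension, so $\delta=0$ indeed forces $\delta_i=0$ for every $i$ and $s=\sum_i s_i$, giving $s=\sum_i(n_i-1)\ge l$ and hence $|\Cx|=l+s\le 2s$, while $l\ge1$ gives the lower bound. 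You are also right to flag the conventions, and in fact you need slightly more than you state: besides excluding self-loops $y\to y$, you must assume no complex is isolated (every complex is the source or product of some reaction), since a singleton linkage class with $n_i=1$, $s_i=0$, $\delta_i=0$ would break $s_i\ge1$ --- for instance, the network $A\to B$ together with an isolated complex $C$ has $\delta=0$ but $|\Cx|=3>2s=2$. Both conventions are standard in CRN theory and implicit in the paper's setup, so with that caveat made explicit your argument is complete and self-contained, which is arguably a service the paper does not provide.
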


From Theorem \ref{CRN:deficiency:zero}, we can conclude that a CRN does not have ZD if $|\Cx| > 2s$. However, because $|\Cx|$ is changed after network translation, we cannot use this condition as a pre-translational filter. Hence, by identifying conditions that imply the number of complexes in any translated networks is greater than $2s$, we \revise{show} theorems for non-translatability to a CRN having ZD as follows. 

\begin{theorem}\label{ZD:1}
Let $\{\zeta_1, \ldots, \zeta_r\}$ be the set of the stoichiometric vectors of a  CRN $\N$. Suppose that 
\begin{itemize}
    \item[(i)] the augmented set of stoichiometric vectors $\mathcal{R_{\pm}} = \{\zeta_1, -\zeta_1, \ldots, \zeta_r, -\zeta_r\}$ has more than or equal to $4s$ elements, and 
    \item[(ii)] there is no $P \subseteq \mathcal{R_{\pm}}$ such that $\sum_{\eta \in P}\eta = \mathbf{0}$ and $\zeta_i + \zeta_j \neq \mathbf{0}$ for any $\zeta_i, \zeta_j \in P$.
\end{itemize}
Then the number of complexes is greater than $2s$ in any translated network. Consequently, there exist no translated networks having ZD.  
\end{theorem}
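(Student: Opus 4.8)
The plan is to reduce the statement to a purely combinatorial inequality on the complexes and then settle it with a short graph-theoretic argument driven by hypothesis (ii). The starting point is Theorem~\ref{CRN:deficiency:zero}: since $s$ is invariant under translation (translation preserves stoichiometric vectors) and any ZD network satisfies $|\Cx| \le 2s$, it suffices to prove that \emph{every} translated network has $|\Cx| > 2s$; the final ``consequently'' clause is then immediate. I would first record the only structural fact about translations that I need, namely that the set of \emph{distinct} stoichiometric vectors is preserved (shifting keeps each vector, while merging and dividing act only on reactions sharing a vector). Hence in any translated network each $\zeta_i$ is realized as $\nu'-\nu$ for two complexes $\nu,\nu'$, so writing $V := \Cx$ for the complex set of the translated network gives $\mathcal{R}_{\pm} \subseteq V - V$, where $V - V := \{p-q : p,q \in V\}$.

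The core is the following lemma: if $\mathcal{R}_{\pm} \subseteq V - V$ and hypothesis (ii) holds, then $|\mathcal{R}_{\pm}| \le 2(|V|-1)$. To prove it I would introduce a \emph{pair graph} $H$ on vertex set $V$. The set $\mathcal{R}_{\pm}$ is symmetric and contains no zero vector, so it partitions into $|\mathcal{R}_{\pm}|/2$ opposite pairs $\{\zeta,-\zeta\}$; for each such pair I pick complexes $p,q \in V$ with $p-q=\zeta$ and add the undirected edge $\{p,q\}$ to $H$. Distinct pairs yield distinct vertex-pairs (otherwise the single vector $p-q$ would equal two inequivalent elements up to sign), so $H$ is a simple graph with exactly $|\mathcal{R}_{\pm}|/2$ edges.

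The decisive step is to show $H$ is a forest. Suppose $H$ contained a cycle through vertices $w_0, w_1, \dots, w_{k-1}, w_0$ with $k \ge 3$. Orienting each edge, $w_j - w_{j-1} = \eta_j$ for some $\eta_j \in \mathcal{R}_{\pm}$, and telescoping around the cycle gives $\sum_{j=1}^{k}\eta_j = \mathbf{0}$. Because the edges come from distinct opposite pairs, the $\eta_j$ are distinct and no two are negatives of one another; thus $P := \{\eta_1,\dots,\eta_k\} \subseteq \mathcal{R}_{\pm}$ is a nonempty set with $\sum_{\eta \in P}\eta = \mathbf{0}$ and $\zeta_i + \zeta_j \ne \mathbf{0}$ for all $\zeta_i,\zeta_j \in P$, which is exactly the configuration forbidden by hypothesis (ii). This contradiction shows $H$ is a forest, whence $|\mathcal{R}_{\pm}|/2 \le |V| - 1$. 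Combining with hypothesis (i), $4s \le |\mathcal{R}_{\pm}| \le 2(|V|-1)$, so $|V| = |\Cx| \ge 2s+1 > 2s$, as required.

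I expect the main obstacle to be the clean formulation of the pair graph and the verification that a cycle really produces a \emph{subset} (not merely a multiset) of $\mathcal{R}_{\pm}$ violating (ii): the bookkeeping that distinct cycle edges correspond to distinct, pairwise non-opposite elements of $\mathcal{R}_{\pm}$ is precisely where the ``no opposite pair'' wording of hypothesis (ii) is used, and making that step airtight is the crux. By comparison, the reduction through Theorem~\ref{CRN:deficiency:zero}, the invariance of the distinct stoichiometric vectors, and the final arithmetic are routine.
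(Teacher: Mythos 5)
Your proof is correct and takes essentially the same route as the paper's: treat reactions as undirected edges, use condition (ii) to forbid cycles via telescoping (so the graph is a forest), use condition (i) to lower-bound the edge count by $2s$, and invoke Theorem~\ref{CRN:deficiency:zero} together with translation-invariance of $s$ to conclude. If anything, your bookkeeping is slightly tighter than the paper's: by taking one representative edge per opposite pair $\{\zeta,-\zeta\}$, a cycle in your graph $H$ yields a genuine \emph{subset} of $\mathcal{R}_{\pm}$ violating (ii), whereas the paper asserts directly that the whole translated network is a forest with exactly $\frac{1}{2}|\mathcal{R}_{\pm}|$ edges, which glosses over translations in which dividing realizes the same stoichiometric vector on two distinct edges.
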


\begin{remark}
 In condition (i) above, $\mathcal R_{\pm}$ collects all the reactions and their reverse reactions. In other words, we regard each reaction as a pair of reversible reactions. Condition (ii) means that there is no `cycle' except for a pair of reversible reactions. %The proof of Theorem \ref{ZD:1} is given below.
\end{remark}

\begin{proof}[Proof of Theorem~\ref{ZD:1}]
Ignoring the direction of reactions, consider $\N$ as an undirected graph, e.g., $0 \to A$ and $A \to 0$ are regarded as an undirected edge between two complexes $0$ and $A$. By condition (ii), any combination of the elements in $\mathcal R_{\pm}$ cannot form a cycle. This implies that any translated CRN can be seen as a forest (i.e., a graph with no cycle). Regarding a translated CRN as an undirected graph, let us denote the number of nodes (= complexes) and the number of edges (= pairs of reverse reactions) by $|\Cx|$ and $m$, respectively. Since the graph is a forest, we have $m < |\Cx|$. This completes the proof since by condition (i),
 \begin{align*}
     2s \leq \frac{1}{2}|\mathcal R _{\pm}| = m <  |\Cx|.
 \end{align*}
\end{proof}

\begin{theorem}
If the augmented set $\mathcal{R}_{\pm}$ defined in Theorem \ref{ZD:1} has more than $2s(2s-1)$ elements then the number of complexes is greater than $2s$ in any translated network. Consequently, there exist no translated networks having ZD. 
\label{ZD:2}
\end{theorem}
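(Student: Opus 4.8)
The plan is to bound the number of complexes $|\Cx|$ of any translated network from below by a counting argument, and then invoke Theorem \ref{CRN:deficiency:zero} to exclude ZD. The one structural input I would use is that network translation preserves the set of (distinct) stoichiometric vectors, as noted just before Theorem \ref{ZD:1} and encoded in Proposition \ref{prop:trans and kinetics}: in any translated network, for each stoichiometric vector $\zeta_i$ of the original CRN $\N$ there is at least one reaction $\nu \to \nu'$ with $\nu' - \nu = \zeta_i$, and since stoichiometric vectors are nonzero we have $\nu \neq \nu'$.

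First I would observe that every element of $\mathcal{R}_{\pm}$ arises as the difference of an ordered pair of distinct complexes in the translated network. Indeed, fixing a reaction $\nu \to \nu'$ with $\nu' - \nu = \zeta_i$, the vector $\zeta_i = \nu' - \nu$ is the difference of the ordered pair $(\nu', \nu)$, while $-\zeta_i = \nu - \nu'$ is the difference of the reversed ordered pair $(\nu, \nu')$. Crucially, I only need $\nu, \nu' \in \Cx$ here, not that the reverse reaction $\nu' \to \nu$ exists, so this works for an arbitrary (not necessarily weakly reversible) translated network. Hence all the distinct vectors collected in $\mathcal{R}_{\pm}$ lie in the image of the map sending an ordered pair of distinct complexes to its difference.

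Next I would count. The number of ordered pairs of distinct complexes is $|\Cx|(|\Cx| - 1)$, so the image of that difference map has at most this many elements, and therefore $|\mathcal{R}_{\pm}| \le |\Cx|(|\Cx| - 1)$. Suppose toward a contradiction that some translated network had $|\Cx| \le 2s$. Since $x \mapsto x(x-1)$ is nondecreasing for $x \ge 1$, this would give $|\mathcal{R}_{\pm}| \le |\Cx|(|\Cx| - 1) \le 2s(2s - 1)$, contradicting the hypothesis $|\mathcal{R}_{\pm}| > 2s(2s-1)$. Thus $|\Cx| > 2s$ in every translated network.

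Finally, I would close with Theorem \ref{CRN:deficiency:zero}, which states that any CRN with ZD satisfies $|\Cx| \le 2s$. Because $s$ is translation invariant and every translated network has $|\Cx| > 2s$, no translated network can have ZD. I do not expect a genuine obstacle here; the only points needing care are bookkeeping ones, namely counting $\mathcal{R}_{\pm}$ as a set so coincident vectors $\zeta_i = -\zeta_j$ are not double counted, and noting that a single reaction simultaneously realizes both $\zeta_i$ and $-\zeta_i$ as complex differences. The crux is simply the elementary identity that $2s(2s-1)$ is exactly the number of ordered pairs of distinct complexes when $|\Cx| = 2s$.
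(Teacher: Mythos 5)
Your proof is correct and is essentially the paper's own argument: the paper counts the unordered pairs $\{\zeta,-\zeta\}$ as undirected edges and notes that more than $s(2s-1)$ distinct edges (the edge count of the complete graph on $2s$ nodes) forces more than $2s$ complexes, while you count ordered pairs of distinct complexes against $2s(2s-1)$ --- the same pigeonhole up to a factor of two, with the same key observation that each element of $\mathcal{R}_{\pm}$ is realized as a difference of complexes in any translated network. Both arguments then conclude $|\mathcal{C}|>2s$ and rule out ZD via Theorem~\ref{CRN:deficiency:zero} together with the translation invariance of $s$.
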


\begin{proof}
Having more than $2s(2s-1)$ elements implies that there are more than $s(2s-1)$ different edges when we ignore the direction of the edges. Because $s(2s-1)$ is the number of edges of the complete graph with $2s$ nodes, any undirected graph of more than $s(2s-1)$ edges must have more than $2s$ nodes.
\end{proof}

We incorporate the above three conditions from Theorems \ref{theorem:WR:translation}, \ref{ZD:1}, and \ref{ZD:2} into the code to exclude a network with no possibility to be translated to a network with WR and ZD before generating translated networks. We call this procedure the \emph{pre-translational filter}. Note that computationally checking the existence of an element with strictly positive entries in the kernel is not a trivial task \cite{Dines1926positive}. This difficulty is resolved by establishing the equivalent linear programming problem (condition ($iii$) in Theorem \ref{theorem:WR:translation}), which is easy to code.\\

\subsection{The post-translational filter and directly checking the network structure after translation}
For CRNs that pass the pre-translation filter, our code generates all translated networks with a user-defined maximum reaction order. Then for each translated network, the code checks whether the translated network has WR and ZD. Before checking WR and ZD using the graph search algorithm, which is slow, the code excludes parts of the translated networks by using the necessary condition for ZD in Theorem \ref{CRN:deficiency:zero}. We call this procedure the \emph{post-translational filter}. This requires just counting the number of complexes of each translated network, which greatly accelerates the code compared to the graph search algorithm. Furthermore, we need to compute $s$ only one time for all translated networks because $s$ is invariant under network translation.
%Note that the conditions from Theorems \ref{theorem:WR:translation}, \ref{ZD:1}, and \ref{ZD:2} do not need to be checked again after translation because the conditions involve only the stoichiometric vectors which are preserved under network translation. Finally, 
For CRNs that pass both the pre- and post-translational filters, the code directly checks if each CRN has WR and ZD. This step requires the graph search algorithm \cite{Tarjan1972depth}.

\subsection{{The pre- and post-translational filters greatly accelerate the computational code}}

\begin{figure}\label{fig:computation-time}
    \centering
        \includegraphics[width=\textwidth]{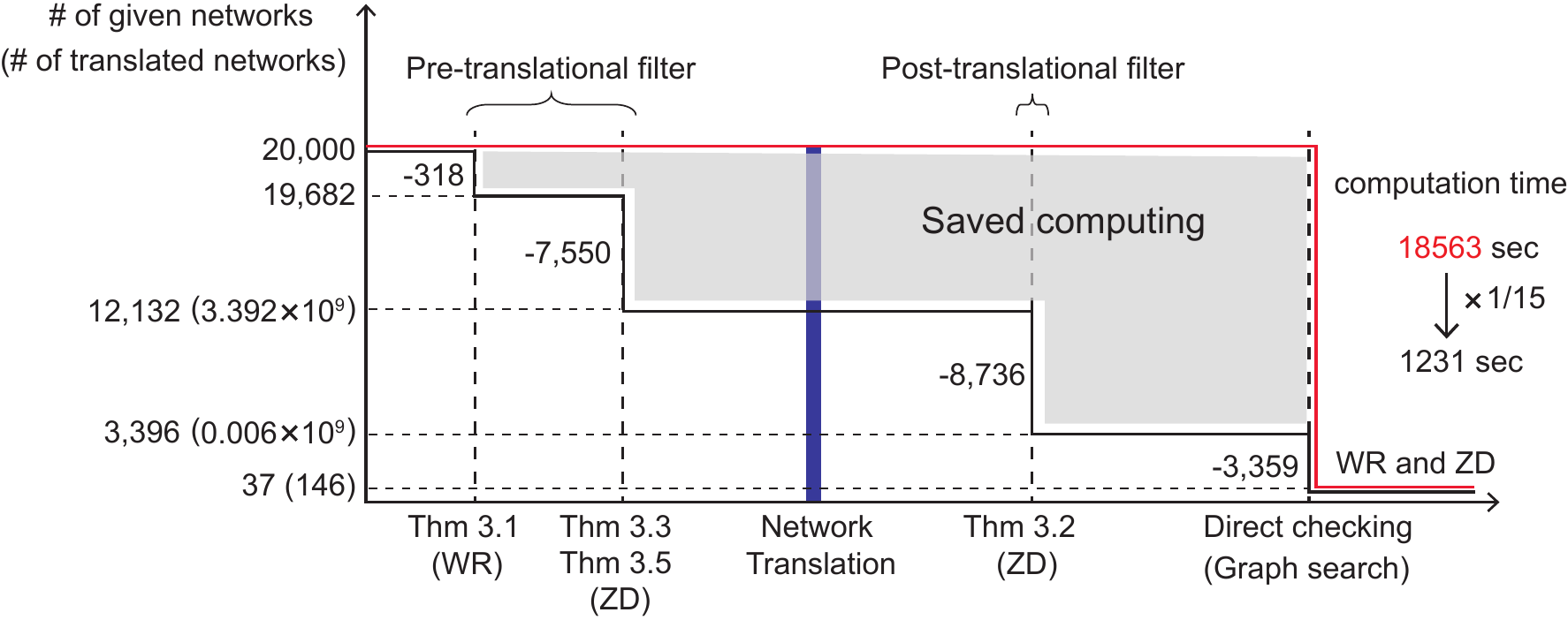}
        \caption{The pre- and post-translational filters effectively reduce the computational burden. The black line represents the code incorporating the pre- and post-translational filters based on all theorems, and the red line represents the naive code without the filters. At the baseline, 20,000 networks with two species and ten reactions up to order two are given. Among them, 12,132 networks remain after the pre-translational filter (Theorems \ref{theorem:WR:translation}, \ref{ZD:1}, and \ref{ZD:2}), and then network translation generates $3.392 \times 10^9$ translated networks in total. Due to the post-translational filter (Theorem \ref{CRN:deficiency:zero}), only 3,396 out of 12,132 networks survive. In terms of their translated networks, only $0.006 \times 10^9$ translated networks remain. Finally, 37 networks are confirmed to have 146 translated networks with WR and ZD in total. Thanks to these pre- and post-translational filters, the computation time is reduced by 15 times. Note that the scale of the vertical axis is not linear but is adjusted for visibility.}
\end{figure}

We test how much the efficiency of the code is improved by using the pre- and post-translational filters (Theorems \ref{theorem:WR:translation}, \ref{ZD:1}, \ref{ZD:2}, and \ref{CRN:deficiency:zero}) compared to the naive code without the filters (Figure \ref{fig:computation-time}). Specifically, with two species $A$ and $B$, we have six complexes up to order two: $0, A, B, 2A, A+B$, and $2B$. Then there are $6 \times 5 = 30$ possible reactions. Among them we randomly choose ten reactions and generate 20,000 CRNs. For these 20,000 CRNs, we calculate the computational time. The filters make the code 15 times more efficient by excluding 7,868 networks before translation and avoiding the graph search algorithm for $\sim 3\times10^9$ translated networks. 

Note that the code does not use dividing when translating a network as it gives infinitely many combinations. For instance, a reaction with the rate function $k_1x_1$ can be divided into two reactions with the rate functions $ck_1x_1$ and $(1-c)k_1x_1$ for arbitrary constant $0 < c <1$.

\subsection{The fraction of weakly reversible or deficiency zero networks with or without network translation}

While the majority of previous studies focused on the advantage of having WR and ZD (e.g., recurrence of trajectories \cite{WR:ref:Anderson}, existence of positive steady states \cite{WR:ref:Boros,WR:ref:Deng}, uniqueness and stability of positive steady states \cite{FeinbergLecture,DZ:ref:Feinberg1}, and derivation of stationary distributions and positive steady states \cite{Anderson2010, Hong2021:CommBio,Johnston2019,JMP2019:parametrization}),
%\cite{WR:ref:Anderson,WR:ref:Boros,WR:ref:Deng,FeinbergLecture,DZ:ref:Feinberg1,Johnston2019,JMP2019:parametrization, Anderson2010, Hong2021:CommBio},
quantitative analysis regarding how often such properties are satisfied has rarely been performed. In \cite{AndersonNguyen2021,AndersonNguyen2022}, the limiting probability for a random network having ZD when the number of species increases to the infinity was investigated. Here, we use TOWARDZ to compute how large a fraction of monomolecular or bimolecular CRNs have WR, ZD, or both among  $\sim 10^7$ CRNs before and after network translations depending on the number of species and reactions. 

As shown in Figure \ref{fig:fraction WRZD}, as the number of reactions increases, the fraction of CRNs having WR increases, while the fraction of CRNs having ZD decreases. Thus, there is an optimal number of reactions for the maximum fraction having both WR and ZD. Importantly, at least about twice as many CRNs have both WR and ZD with network translation than those without translation. In particular, when the number of reactions is large ($>5$), the translation greatly increases the number of CRNs with WR and ZD. Furthermore, when the number of species $d$ increases, the fraction of CRNs having WR or both WR and ZD significantly decreases (top and bottom rows in Figure \ref{fig:fraction WRZD}). Note that the range of each vertical axis decreases from the left to the right columns.

%This difference can be seen by the ranges of the vertical axes, which largely decrease from the left to the right column. 

In general, it is natural to have a higher fraction of networks with WR because networks become \revise{denser} when there are more reactions. However, when the number of reactions is a small even number (e.g., two or four), it results in a higher fraction of CRNs with WR. This unexpected result can be understood by comparing the case of two reactions with the case of three reactions. In particular, when the number of reactions is three, the reactions must form a triangle to have WR. It has a much lower probability than forming a pair of reversible reactions when the number of reactions is two. Consequently, the fraction of networks with WR and ZD also has an unexpected peak when the number of reactions is a small even number.

\begin{figure}\label{fig:fraction WRZD}
    \centering
        \includegraphics[width=\textwidth]{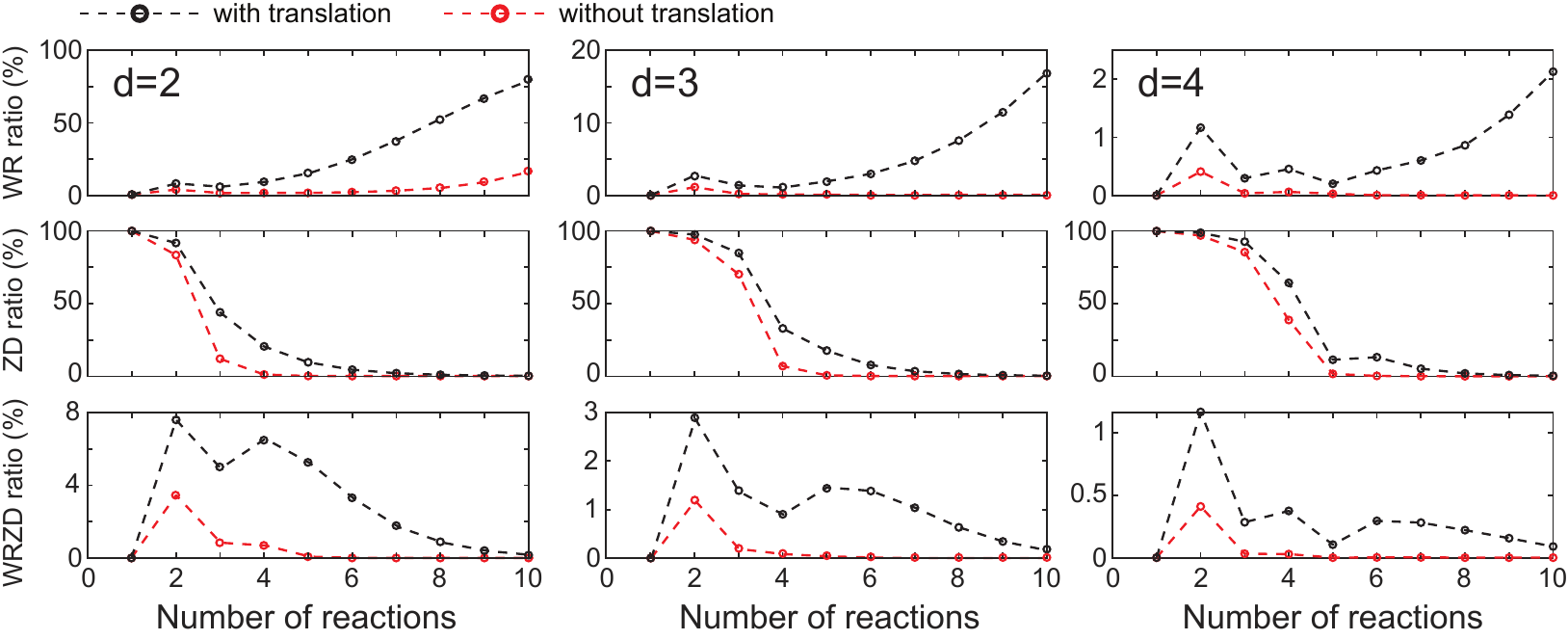}
        \caption{The fraction of CRNs admitting WR (top row), ZD (middle row), or both (bottom row) with or without network translation under the fixed number of reactions in networks. In this analysis, we set the maximum order of reactions as two for both the original and translated networks, i.e., at most bimolecular reactions. This not only prevents the code from running indefinitely, but also gives a biological relevance. We performed analysis for networks with two (left column), three (middle column), and four (right column) species.}
\end{figure}

\section{Applications} % OR, "Applications of our code on analyzing long-term dynamics of stochastic and deterministic CRNs"
\label{section:application}
In this section, we describe how to use TOWARDZ to translate CRNs and then derive analytic solutions for their long-term behaviors: stationary distributions and steady states for stochastic and deterministic models of CRNs, respectively. For a complete study of long-term behaviors of the stochastic systems, we also show that the irreducibility, which guarantees no possibility of absorption, can be checked with the computational algorithm.

%Furthemore,..., translatability to a weakly reversible network is a necessary condition for irreducibility of CTMC.

\subsection{Derivation of stationary distributions of stochastic CRNs}

The stationary distribution of a stochastic CRN with the desired structures WR and ZD is the product of Poisson distributions if the CRN follows the mass-action kinetics (Theorem \ref{anderson:2010:theorem}). When a CRN does not have such structures, network translation can be utilized to obtain a CRN with WR and ZD \cite{Hong2021:CommBio}. However, the translation results in non-mass-action kinetics, e.g., when $A\to B$ is shifted to $A+B\to 2B$, the original kinetics is preserved so that the rate of the reaction $A+B\to 2B$ in the translated network is linear with respect to the counts of $A$. For these resulting kinetics, Anderson et al. \cite{Anderson2016, Anderson2010} established a condition on non-mass-action propensity functions for deriving stationary distributions, and we have generalized the condition and provided the following theorem in our previous work \cite{Hong2021:CommBio} \revise{(see the Supplementary Information of \cite{Hong2021:CommBio} for the detailed proof)}:

\begin{theorem}\label{thm:hong2021-main} 
Suppose that a complex balanced equilibrium $\mathbf{c}$ defined in \eqref{eq:cx balance}  exists for the associated deterministic mass-action model of a CRN with rate constants $\kappa_{k}$. Let $\lambda_k(\mathbf{n})$'s be the propensity functions of the associated stochastic model for the CRN.
Moreover, suppose that there is a vector $\mathbf{b}$ such that for
$\mathbf{n} \in \Gamma=\{\mathbf{n}:\mathbf{n}\ge \mathbf{b}\}$, $\lambda_k(\mathbf{n})>0$ whenever $\mathbf{n}\ge \nu_k + \mathbf{b}$ and $\lambda_k(\mathbf{n})=0$ otherwise for each $k=1,2,\ldots,r$.
Assume further that all the propensity functions can be factorized as
\begin{equation}\label{eq:prop-fact}
\lambda_k(\mathbf{n})=\kappa_{k}\theta(\mathbf{n}) \omega (\mathbf{n}-\nu_k) \mathbf{1}_{\{\mathbf{n}\ge \nu_k\}}\text{ on }\Gamma=\{\mathbf{n}:\mathbf{n}\ge \mathbf{b}\}
\end{equation}
where $\theta$ and $\omega$ are non-negative functions such that $\theta(\mathbf{n})>0$ whenever $\mathbf{n} \in \Gamma$. Then $\Gamma$ is a closed subset, and the stochastic model admits a stationary measure that can be written as
\[
\pi(\mathbf{n})=
\begin{cases}
    \dfrac{\mathbf{c}^{\mathbf{n}}}{\theta(\mathbf{n)}}    & \text{if } \mathbf{n} \in \Gamma\\
    0    & \text{if } \mathbf{n} \in \Gamma^c
\end{cases}.
\]
If $\pi$ is summable over the state space, then the measure $\pi$ can be normalized to be a stationary distribution.
\end{theorem}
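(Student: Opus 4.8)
The plan is to verify directly that the proposed measure $\pi$ solves the stationarity equation associated with the CME \eqref{eq:CME}, following the strategy behind the stochastic deficiency zero theorem (Theorem \ref{anderson:2010:theorem}) but with the mass-action factor $\mathbf{n}!$ replaced by the abstract factor $\theta(\mathbf{n})$. First I would dispose of the boundary. Using the hypothesis that, on $\Gamma$, $\lambda_k(\mathbf{n})>0$ exactly when $\mathbf{n}\ge \nu_k+\mathbf{b}$, I would observe that any reaction firing from a state $\mathbf{n}\in\Gamma$ with $\mathbf{n}\ge \nu_k+\mathbf{b}$ lands at $\mathbf{n}+\nu_k'-\nu_k \ge \mathbf{b}+\nu_k' \ge \mathbf{b}$, hence again in $\Gamma$; conversely no reaction with positive rate carries a state outside $\Gamma$ into $\Gamma$. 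This shows that $\Gamma$ is closed and that at every state $\mathbf{n}\notin\Gamma$ both the in- and out-flow terms in \eqref{eq:CME} vanish, so the equation holds trivially there.

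On $\Gamma$, I would substitute the factorization \eqref{eq:prop-fact} together with $\pi(\mathbf{n})=\mathbf{c}^{\mathbf{n}}/\theta(\mathbf{n})$ into each term of the balance equation. In the out-flow term the factor $\theta(\mathbf{n})$ cancels, giving $\lambda_k(\mathbf{n})\pi(\mathbf{n}) = \kappa_k \mathbf{c}^{\nu_k}\,\omega(\mathbf{n}-\nu_k)\mathbf{c}^{\mathbf{n}-\nu_k}$ after writing $\mathbf{c}^{\mathbf{n}}=\mathbf{c}^{\nu_k}\mathbf{c}^{\mathbf{n}-\nu_k}$. In the in-flow term, using $\mathbf{n}-(\nu_k'-\nu_k)-\nu_k=\mathbf{n}-\nu_k'$ and $\mathbf{c}^{\mathbf{n}-\nu_k'+\nu_k}=\mathbf{c}^{\nu_k}\mathbf{c}^{\mathbf{n}-\nu_k'}$, the $\theta$ factors again cancel and I would obtain $\lambda_k(\mathbf{n}-\nu_k'+\nu_k)\pi(\mathbf{n}-\nu_k'+\nu_k)=\kappa_k\mathbf{c}^{\nu_k}\,\omega(\mathbf{n}-\nu_k')\mathbf{c}^{\mathbf{n}-\nu_k'}$. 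Abbreviating $g(\mathbf{m}):=\omega(\mathbf{m})\mathbf{c}^{\mathbf{m}}$, the stationarity equation on $\Gamma$ then collapses to
\[
0 = \sum_k \kappa_k \mathbf{c}^{\nu_k}\bigl(g(\mathbf{n}-\nu_k') - g(\mathbf{n}-\nu_k)\bigr).
\]

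The decisive step is to regroup this sum by complexes rather than by reactions. Collecting the coefficient of each $g(\mathbf{n}-\eta)$ for a fixed complex $\eta$, a reaction contributes $+\kappa_k\mathbf{c}^{\nu_k}$ when $\nu_k'=\eta$ and $-\kappa_k\mathbf{c}^{\nu_k}$ when $\nu_k=\eta$, so that coefficient equals $\sum_{k:\nu_k'=\eta}\kappa_k\mathbf{c}^{\nu_k}-\sum_{k:\nu_k=\eta}\kappa_k\mathbf{c}^{\nu_k}$, which is exactly the complex balance difference at $\eta$. By the complex balanced equilibrium condition \eqref{eq:cx balance} for $\mathbf{c}$, this vanishes for every $\eta\in\mathcal C$, whence the whole sum is zero and $\pi$ solves the stationarity equation on $\Gamma$. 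Finally, summability of $\pi$ over the state space permits normalization to a stationary distribution.

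I expect the main obstacle to be the bookkeeping at the boundary of $\Gamma$: near the boundary some of the shifted states $\mathbf{n}-\nu_k$ or $\mathbf{n}-\nu_k'$ may leave $\Gamma$, so the indicators $\mathbf{1}_{\mathbf{n}\ge\nu_k}$ in \eqref{eq:prop-fact} together with the positivity pattern prescribed by $\mathbf{b}$ must be shown to switch off precisely those terms that would otherwise spoil the complex-wise cancellation. The interior cancellation is a routine consequence of complex balance; the delicate part is verifying that the hypotheses on $\mathbf{b}$ force $g$ to be evaluated only where the factorization is valid, so that the complex grouping argument remains exact.
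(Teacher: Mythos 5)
Your approach is the right one and, in outline, it coincides with the proof this paper relies on: the paper does not prove Theorem~\ref{thm:hong2021-main} in the text but defers to the Supplementary Information of \cite{Hong2021:CommBio}, where the argument is precisely a direct verification of the stationarity equation generalizing the Anderson--Craciun--Kurtz computation, with the mass-action factor $\mathbf{n}!$ replaced by $\theta(\mathbf{n})$ and the sum regrouped by complexes against the complex balance condition \eqref{eq:cx balance}. Your treatment of $\Gamma^c$ is correct: $\mathbf{n}\ge \nu_k+\mathbf{b}$ gives $\mathbf{n}+\nu_k'-\nu_k\ge \mathbf{b}+\nu_k'\ge \mathbf{b}$, so $\Gamma$ is closed, and at $\mathbf{n}\notin\Gamma$ every out-flow term dies because $\pi(\mathbf{n})=0$ while every in-flow term dies either because the pre-jump state lies outside $\Gamma$ (so $\pi$ vanishes there) or because a positive rate from inside $\Gamma$ would land back in $\Gamma$, a contradiction.

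The one step you leave open --- the boundary bookkeeping with the indicators --- does close, and since it is the only delicate point you should carry it out rather than flag it. Two observations suffice. First, on $\Gamma$ the prescribed positivity pattern and the factorization \eqref{eq:prop-fact} are mutually consistent only if $\omega(\mathbf{n}-\nu_k)=0$ whenever $\mathbf{n}\in\Gamma$, $\mathbf{n}\ge\nu_k$, but $\mathbf{n}-\nu_k\not\ge\mathbf{b}$ (because $\theta(\mathbf{n})>0$ on $\Gamma$ and $\kappa_k>0$); hence with $g(\mathbf{m}):=\omega(\mathbf{m})\mathbf{c}^{\mathbf{m}}$ every out-flow term equals $\kappa_k\mathbf{c}^{\nu_k}\,g(\mathbf{n}-\nu_k)\,\mathbf{1}_{\{\mathbf{n}-\nu_k\ge\mathbf{b}\}}$. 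Second, for an in-flow term with pre-jump state $\mathbf{m}=\mathbf{n}-\nu_k'+\nu_k\notin\Gamma$ the term vanishes because $\pi(\mathbf{m})=0$, and simultaneously $\mathbf{1}_{\{\mathbf{n}-\nu_k'\ge\mathbf{b}\}}=0$: if $\mathbf{n}-\nu_k'\ge\mathbf{b}$ then $\mathbf{m}\ge\mathbf{b}+\nu_k\ge\mathbf{b}$, contradicting $\mathbf{m}\notin\Gamma$. So every in-flow term, whether or not the factorization is available at $\mathbf{m}$, equals $\kappa_k\mathbf{c}^{\nu_k}\,g(\mathbf{n}-\nu_k')\,\mathbf{1}_{\{\mathbf{n}-\nu_k'\ge\mathbf{b}\}}$ (note $\mathbf{m}\ge\nu_k+\mathbf{b}$ is equivalent to $\mathbf{n}-\nu_k'\ge\mathbf{b}$). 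Both sides of the balance equation at $\mathbf{n}\in\Gamma$ are then sums over complexes $\eta$ of $g(\mathbf{n}-\eta)\,\mathbf{1}_{\{\mathbf{n}-\eta\ge\mathbf{b}\}}$ weighted by $\sum_{k:\nu_k'=\eta}\kappa_k\mathbf{c}^{\nu_k}$ and $\sum_{k:\nu_k=\eta}\kappa_k\mathbf{c}^{\nu_k}$, respectively, with the \emph{same} indicator whether $\eta$ appears as a source or a product; the cancellation via \eqref{eq:cx balance} is therefore exact rather than merely formal. With that inserted, your proof is complete and matches the cited one.
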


\revise{
\begin{remark}
The vector $\mathbf{b} = (b_1, \ldots, b_d)$ roughly represents how many additional species are needed to make all the propensities positive compared with the given source complex. Thus, $\mathbf{b} = \mathbf{0}$ when a stochastic CRN admits the mass-action kinetics, because a propensity function is positive if and only if there are at least as many species as the source complex. For example, the reaction $A+B \to 2B$ following the mass-action kinetics has a positive propensity if and only if the current state is (coordinate-wise) greater than or equal to $(1, 1)$, so $\mathbf{b}=(0,0)$.
\end{remark}
}

\revise{
\begin{remark}
Theorem 4.1 contains an unexpected statement. Although a CRN follows a non-mass-action kinetics, $\mathbf{c}$ in (4.1) is a complex balanced equilibrium of the deterministic system of the same CRN under \emph{mass-action kinetics}. Therefore, even when we apply Theorem 4.1 to a translated CRN following non-mass-action kinetics, the existence of the complex balanced equilibrium is always guaranteed as long as the translated network has WR and ZD.
\end{remark}
} 

\revise{This theorem indicates that we can derive a stationary distribution when propensity functions follow a general rate law, including the mass-action kinetics, because the mass-action kinetics is a special case that can be obtained by letting $\theta(\mathbf{n}) = \mathbf{n}!$ and $\omega(\mathbf{n}) = \theta(\mathbf{n})^{-1}$.} 

For $\lambda_k$ under non-mass-action kinetics, it is not straightforward to select the rate constant $\kappa_k$ in the propensity factorization \eqref{eq:prop-fact}. For instance, if $\lambda_k(\mathbf n)=2n_{[1]}  n_{[2]} +4 n_{[2]}^2$, one can factorize it as either $2 n_{[2]}(n_{[1]}+2 n_{[2]})$ or $n_{[2]}(2 n_{[1]}+4 n_{[2]})$. In \cite{Hong2021:CommBio}, these constants were chosen as the coefficients of the highest-degree terms of the propensity functions for a translated CRN, which are polynomials. \revise{Unfortunately}, there was no theoretical background that justified the choice of the constants. \revise{This led one to consider infinitely many choices of $\kappa_i$'s to check the existence of propensity factorization. In order to resolve this main difficulty in searching propensity factorization, we provide Theorem \ref{thm:polynomial-kappa}. In this theorem, we show that for polynomial propensities it is sufficient to check factorizability of given propensities with the rate constants chosen to be the coefficient of the highest-degree term of the propensities.}

\begin{theorem}\label{thm:polynomial-kappa}
Let $\lambda_k(\mathbf{n})$'s be the propensity functions of a stochastic model for a given CRN, and assume they are polynomials. If there exists a propensity factorization then there must exist a propensity factorization with $\kappa_k$ chosen as the coefficients of the highest-degree terms of the propensity functions.
\end{theorem}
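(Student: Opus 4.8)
The plan is to fix, once and for all, a monomial order $\prec$ on $\mathbb{Z}^d_{\ge 0}$ that refines total degree, so that each polynomial $\lambda_k$ has an unambiguous leading (highest-degree) coefficient $\bar\kappa_k:=\mathrm{lc}(\lambda_k)>0$, and then to convert a given factorization $\lambda_k(\mathbf{n})=\kappa_k\,\theta(\mathbf{n})\,\omega(\mathbf{n}-\nu_k)\mathbf{1}_{\mathbf{n}\ge\nu_k}$ into one whose rate constants are exactly the $\bar\kappa_k$. The whole argument reduces to a single structural identity that I would isolate as the main lemma:
\[
\frac{\bar\kappa_k}{\kappa_k}=C\,\rho^{-\nu_k}\qquad\text{for some } C>0 \text{ and } \rho\in\mathbb{R}^d_{>0},\ \text{independent of } k.
\]
Granting this, I set $\theta^*(\mathbf{n}):=C^{-1/2}\rho^{\mathbf{n}}\theta(\mathbf{n})$ and $\omega^*(\mathbf{m}):=C^{-1/2}\rho^{-\mathbf{m}}\omega(\mathbf{m})$; a direct substitution gives $\bar\kappa_k\,\theta^*(\mathbf{n})\,\omega^*(\mathbf{n}-\nu_k)=\big(\bar\kappa_k C^{-1}\rho^{\nu_k}\big)\theta(\mathbf{n})\omega(\mathbf{n}-\nu_k)=\kappa_k\,\theta(\mathbf{n})\omega(\mathbf{n}-\nu_k)=\lambda_k(\mathbf{n})$, so $(\theta^*,\omega^*,\{\bar\kappa_k\})$ reproduces every propensity. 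Because $\rho^{\pm\mathbf{n}}>0$ and $C>0$, these everywhere-positive factors change neither the non-negativity of $\theta,\omega$, nor the positivity of $\theta$ on $\Gamma$, nor the zero/support structure encoded by $\mathbf{1}_{\mathbf{n}\ge\nu_k}$ and the threshold $\mathbf{b}$; hence $(\theta^*,\omega^*)$ is a genuine propensity factorization with $\kappa_k=\bar\kappa_k$.

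To prove the identity I would exploit that the factorization forces the shift ratios of $\omega$ to be \emph{rational}. Cancelling the common factor $\theta$, for reactions $k,k'$ and $\zeta:=\nu_{k'}-\nu_k$ one gets
\[
\frac{\omega(\mathbf{m}+\zeta)}{\omega(\mathbf{m})}=\frac{\kappa_{k'}}{\kappa_k}\,\frac{\lambda_k(\mathbf{m}+\nu_{k'})}{\lambda_{k'}(\mathbf{m}+\nu_{k'})},
\]
a ratio of polynomials; call it $R_\zeta$ and put $L(\zeta):=\mathrm{lc}(R_\zeta)$. The telescoping relation $R_{\zeta_1+\zeta_2}(\mathbf{m})=R_{\zeta_2}(\mathbf{m}+\zeta_1)R_{\zeta_1}(\mathbf{m})$, together with the facts that $\mathrm{lc}$ is multiplicative on rational functions and invariant under integer shifts of the argument, shows that $L$ is a homomorphism from the lattice $\Lambda=\langle\nu_{k'}-\nu_k\rangle$ into $(\mathbb{R}_{>0},\times)$; any such homomorphism extends to one of the form $L(\zeta)=\rho^{\zeta}$ with $\rho\in\mathbb{R}^d_{>0}$. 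Reading off $\mathrm{lc}$ of the explicit expression above and using shift-invariance gives, on the other hand, $L(\nu_{k'}-\nu_k)=\tfrac{\kappa_{k'}}{\kappa_k}\tfrac{\bar\kappa_k}{\bar\kappa_{k'}}$. Equating the two formulas for $L(\nu_{k'}-\nu_k)$ yields $(\bar\kappa_k/\kappa_k)\rho^{\nu_k}=(\bar\kappa_{k'}/\kappa_{k'})\rho^{\nu_{k'}}$ for all $k,k'$, which is precisely the claimed identity with $C$ the common value.

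I expect the main obstacle to be the treatment of the general, possibly non-polynomial, functions $\theta$ and $\omega$: a priori their leading behavior is undefined, so the argument must route everything through the shift ratios $R_\zeta$, which the factorization \emph{guarantees} to be rational, and must check that $\mathrm{lc}$ really is a well-defined, multiplicative, shift-invariant homomorphism landing in $\mathbb{R}_{>0}$ (positivity of $L$ follows because the $R_\zeta$ are ratios of non-negative propensities, whose degree-leading coefficients $\bar\kappa_k$ are positive). A secondary point needing care is the very meaning of ``the coefficient of the highest-degree term'' when several monomials share the top total degree; fixing the degree-refining order $\prec$ at the outset makes $\bar\kappa_k$ unambiguous, and since the construction above works verbatim for every such $\prec$, the conclusion is independent of the choice. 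Finally I would verify the support bookkeeping ($\mathbf{1}_{\mathbf{n}\ge\nu_k}$, the set $\Gamma=\{\mathbf{n}\ge\mathbf{b}\}$, and $\theta^*>0$ on $\Gamma$) is inherited, which is immediate since the conversion multiplies $\theta$ and $\omega$ only by strictly positive factors.
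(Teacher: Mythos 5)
Your proposal is correct and follows essentially the same route as the paper: the heart of the paper's argument (its preliminary ``kappa-affine'' lemma in the Supplementary Materials) is precisely your structural identity that any admissible choice of rate constants can differ from a given one only by a factor $C\rho^{\nu_k}$ with $C>0$ and $\rho\in\mathbb{R}^d_{>0}$, after which the exponential tilt is absorbed into $\theta$ and $\omega$ exactly as in your first display. The one step you assert rather than prove---positivity of the leading coefficients $\bar\kappa_k$ (and hence of $L$)---is indeed routine under a degree-refining order: realize the order on the finite support of $\lambda_k$ by a positive integer weight vector $w$ and evaluate along the lattice points $\mathbf{n}=(t^{w_1},\dots,t^{w_d})$, where non-negativity of $\lambda_k$ for large $t$ forces $\bar\kappa_k>0$.
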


\revise{The proof of Theorem \ref{thm:polynomial-kappa} and the preliminary lemma (Lemma~\ref{Supp-lemma:kappa-affine}) are placed in Supplementary Materials.}

\begin{remark}\label{rmk:massaction-polynomial}
Since all the translated networks from an original network with the stochastic mass-action kinetics have polynomial propensity functions, Theorem \ref{thm:polynomial-kappa} implies that it suffices to check for the existence of a propensity factorization with the constants determined by the coefficients of the highest-degree term to check the existence of propensity factorization with any choice of constants. For instance, if $\lambda_k(n_{[1]}, n_{[2]}) = 7n_{[1]}^2 n_{[2]}+3n_{[1]}$ then $\kappa_k=7$. In case there exist multiple highest-degree terms, we choose $\kappa_k$ with respect to the lexicographic order. For instance, if $\lambda_k(n_{[1]}, n_{[2]})=5n_{[1]}^2 n_{[2]}+7n_{[1]}n_{[2]}^2$, we choose the constant $\kappa_k$ in the term $5n_{[1]}^2 n_{[2]}$, so $\kappa_k=5$.
\end{remark}

\begin{remark}\label{rmk:Grobner-basis}
In Theorem \ref{thm:polynomial-kappa}, the coefficients of the highest-degree terms can be generalized to the leading coefficients of elements in an ideal generated by a Gröbner basis\revise{, e.g., the set of all polynomials with real coefficients. Thus, not only the mass-action kinetics but also a general class of kinetic laws, such as Michaelis-Menten or Hill-type functions can be treated by this theorem.}
%, which is widely used to characterize steady states of deterministic CRNs \cite{ref1:Grobner,ref2:Grobner}. %{\color{red}{[Add a reference about a Gröbner basis]}}
\end{remark}

\begin{example}
Consider a stochastic CRN with the mass-action kinetics shown in Figure \ref{fig:CRN:stocastic:example}, which is not weakly reversible and has a deficiency of two.

\begin{figure}[!h]
\begin{center}
\begin{center}
\begin{tikzpicture}
        \tikzset{vertex/.style = {minimum size=2em}}
        \tikzset{edge/.style = {->,> = {Stealth[length=2mm, width=2mm]}}}
        %\tikzset{edge/.style = {bend, ->,> = {Stealth[length=2mm, width=2mm]}}}
        \tikzset{edge1/.style = {bend left,->,> = {Stealth[length=2mm, width=2mm]}, line width=0.20mm}}
        %\tikzset{edge/.style = {->,> = latex'}}
        % Vertices
        \node[vertex] (C) at (0,0) {$C$};
        \node[vertex] (Z) at (2.2,0) {$0$};
        \node[vertex] (A) at (4.4,0) {$A$};
        \node[vertex] (B) at (6.6,0) {$B$};
        \node[vertex] (twoA) at (2.2,-1) {$2A$};
        \node[vertex] (AB) at (4.4,-1) {$A+B$};
        \node[vertex] (twoB) at (2.2,-2) {$2B$};
        \node[vertex] (BC) at (4.4,-2) {$B+C$};
        % Edges
        \draw[edge, thick] (C.14) to (Z.165)
        node[above,xshift=-8mm] {$r_1$};
        \draw[edge, thick] (Z.194) to (C.345)
        node[below,xshift=7mm] {$r_2$};
        \draw[edge, thick] (Z) to (A)
        node[above,xshift=-11mm] {$r_3$};
        \draw[edge, thick] (A) to (B)
        node[above,xshift=-11mm] {$r_4$};
        \draw[edge, thick] (twoA) to (AB)
        node[above,xshift=-12.5mm] {$r_5$};
        \draw[edge, thick] (twoB) to (BC)
        node[above,xshift=-12.5mm] {$r_6$};
\end{tikzpicture}
\end{center}
\caption{A CRN that has no WR and a deficiency of two. Here, $r_1, \ldots, r_6$ indicate the reactions, not the rate constants.}\label{fig:CRN:stocastic:example}
\end{center}
\end{figure}

\begin{figure}[!h]
{
\footnotesize
\begin{center}
\begin{tikzpicture}
        \tikzset{vertex/.style = {minimum size=2em}}
        \tikzset{edge/.style = {->,> = {Stealth[length=2mm, width=2mm]}}}
        %\tikzset{edge/.style = {bend, ->,> = {Stealth[length=2mm, width=2mm]}}}
        \tikzset{edge1/.style = {bend left,->,> = {Stealth[length=2mm, width=2mm]}, line width=0.20mm}}
        \tikzset{edge2/.style = {bend right,->,> = {Stealth[length=2mm, width=2mm]}, line width=0.20mm}}
        %\tikzset{edge/.style = {->,> = latex'}}
        % Vertices
        \node[vertex] (C) at (0,0) {$C$};
        \node[vertex] (Z) at (1.4,0) {$0$};
        \node[vertex] (A) at (2.8,0) {$A$};
        \node[vertex] (B) at (4.2,0) {$B$};
        \node[vertex,text width=1.4cm,text centered] (D01) at (2.1,0.8) {{ {\bf (T1)}}};
        % Edges
        \draw[edge, thick] (Z.165) to (C.14)
        node[above,xshift=5mm] {$r_2$};
        \draw[edge, thick] (C.345) to (Z.194)
        node[below,xshift=-3mm] {$r_1$};
        \draw[edge, thick] (Z) to (A)
        node[above,xshift=-7.5mm] {$r_3$};
        \draw[edge, thick] (A) to (B)
        node[above,xshift=-7mm] {$r_4,r_5$};
        \draw[edge1, thick] (B) to (C)
        node[below,xshift=18mm,yshift=-2mm] {$r_6$};
        
        % Vertices
        \node[vertex] (C2) at (6,0) {$C$};
        \node[vertex] (AC2) at (7.4,0) {$A+C$};
        \node[vertex] (A2) at (8.8,0) {$A$};
        \node[vertex] (B2) at (10.2,0) {$B$};
        \node[vertex,text width=1.4cm,text centered] (D02) at (8.1,0.8) {{ {\bf (T2)}}};
        % Edges
        \draw[edge, thick] (AC2.351) to (A2.194)
        node[above,xshift=-3mm,yshift=2mm] {$r_1$};
        \draw[edge, thick] (A2.165) to  (AC2.9)
        node[below,xshift=3mm,yshift=-2mm] {$r_2$};
        \draw[edge, thick] (C2) to (AC2)
        node[above,xshift=-10mm] {$r_3$};
        \draw[edge, thick] (A2) to (B2)
        node[above,xshift=-7mm] {$r_4,r_5$};
        \draw[edge1, thick] (B2) to (C2)
        node[below,xshift=18mm,yshift=-2mm] {$r_6$};
        
%Line2        
        % Vertices
        \node[vertex] (C3) at (0,-2.5) {$C$};
        \node[vertex] (AC3) at (1.4,-2.5) {$A+C$};
        \node[vertex] (B3) at (2.8,-2.5) {$B$};
        \node[vertex] (BC3) at (4.2,-2.5) {$B+C$};
        \node[vertex,text width=1.4cm,text centered] (D03) at (2.1,-1.5) {{ {\bf (T3)}}};
        % Edges
        \draw[edge, thick] (B3.14) to (BC3.171)
        node[above,xshift=-3mm] {$r_2$};
        \draw[edge, thick] (BC3.188) to (B3.345)
        node[below,xshift=3mm] {$r_1$};
        \draw[edge, thick] (C3) to (AC3)
        node[above,xshift=-7.5mm] {$r_3$};
        \draw[edge1, thick] (AC3) to (BC3)
        node[above,xshift=-9mm,yshift=2mm] {$r_4,r_5$};
        \draw[edge1, thick] (B3) to (C3)
        node[below,xshift=12mm,yshift=0mm] {$r_6$};
        
        % Vertices
        \node[vertex] (2C4) at (6,-2.5) {$2C$};
        \node[vertex] (C4) at (7.4,-2.5) {$C$};
        \node[vertex] (AC4) at (8.8,-2.5) {$A+C$};
        \node[vertex] (BC4) at (10.2,-2.5) {$B+C$};
        \node[vertex,text width=1.4cm,text centered] (D04) at (8.1,-1.5) {{ {\bf (T4)}}};
        % Edges
        \draw[edge, thick] (2C4.12) to (C4.165)
        node[above,xshift=-4mm] {$r_1$};
        \draw[edge, thick] (C4.194) to (2C4.344)
        node[below,xshift=4mm] {$r_2$};
        \draw[edge, thick] (C4) to (AC4)
        node[above,xshift=-9mm] {$r_3$};
        \draw[edge, thick] (AC4) to (BC4)
        node[above,xshift=-7mm] {$r_4,r_5$};
        \draw[edge1, thick] (BC4) to (2C4)
        node[below,xshift=18mm,yshift=-2mm] {$r_6$};

%Line3        
        % Vertices
        \node[vertex] (BC5) at (0,-5) {$B+C$};
        \node[vertex] (B5) at (1.4,-5) {$B$};
        \node[vertex] (AB5) at (2.8,-5) {$A+B$};
        \node[vertex] (2B5) at (4.2,-5) {$2B$};
        \node[vertex,text width=1.4cm,text centered] (D05) at (2.1,-4) {{ {\bf (T5)}}};
        % Edges
        \draw[edge, thick] (B5.165) to (BC5.9)
        node[above,xshift=4mm] {$r_2$};
        \draw[edge, thick] (BC5.351) to (B5.194)
        node[below,xshift=-2mm] {$r_1$};
        \draw[edge, thick] (B5) to (AB5)
        node[above,xshift=-9mm] {$r_3$};
        \draw[edge, thick] (AB5) to (2B5)
        node[above,xshift=-7mm] {$r_4,r_5$};
        \draw[edge1, thick] (2B5) to (BC5)
        node[below,xshift=18mm,yshift=-1mm] {$r_6$};
        
        % Vertices
        \node[vertex] (A6) at (6.7,-5) {$A$};
        \node[vertex] (B6) at (8.1,-5) {$B$};
        \node[vertex] (AB6) at (9.5,-5) {$A+B$};
        \node[vertex] (AC6) at (8.1,-6) {$A+C$};
        \node[vertex,text width=1.4cm,text centered] (D06) at (8.1,-4) {{ {\bf (T6)}}};
        % Edges
        \draw[edge, thick] (A6.290) to (AC6.170)
        node[above,xshift=-5mm] {$r_2$};
        \draw[edge, thick] (AC6.145) to (A6.330)
        node[below,xshift=5mm] {$r_1$};
        \draw[edge, thick] (B6) to (AB6)
        node[above,xshift=-8.5mm] {$r_3$};
        \draw[edge, thick] (A6) to (B6)
        node[above,xshift=-9mm] {$r_4,r_5$};
        \draw[edge, thick] (AB6) to (AC6)
        node[below,xshift=9mm,yshift=5mm] {$r_6$};
       
%Line4        
        % Vertices
        \node[vertex] (B7) at (0.7,-7.5) {$B$};
        \node[vertex] (BC7) at (2.7,-7.5) {$B+C$};
        \node[vertex] (AB7) at (0.7,-9) {$A+B$};
        \node[vertex] (AC7) at (2.7,-9) {$A+C$};
        \node[vertex,text width=1.4cm,text centered] (D07) at (2.1,-6.7) {{ {\bf (T7)}}};
        % Edges
        \draw[edge, thick] (BC7.171) to (B7.14)
        node[above,xshift=6mm] {$r_1$};
        \draw[edge, thick] (B7.345) to (BC7.188)
        node[below,xshift=-6mm] {$r_2$};
        \draw[edge, thick] (B7) to (AB7)
        node[above,xshift=-2.5mm,yshift=5.5mm] {$r_3$};
        \draw[edge, thick] (AC7) to (BC7)
        node[above,xshift=5mm,yshift=-10mm] {$r_4,r_5$};
        \draw[edge, thick] (AB7) to (AC7)
        node[below,xshift=-10mm,yshift=0mm] {$r_6$};
        
        % Vertices
        \node[vertex] (AC8) at (6,-8) {$A+C$};
        \node[vertex] (A8) at (7.4,-8) {$A$};
        \node[vertex] (2A8) at (8.8,-8) {$2A$};
        \node[vertex] (AB8) at (10.2,-8) {$A+B$};
        \node[vertex,text width=1.4cm,text centered] (D08) at (8.1,-6.7) {{ {\bf (T8)}}};
        % Edges
        \draw[edge, thick] (AC8.8) to (A8.165)
        node[above,xshift=-2.5mm] {$r_1$};
        \draw[edge, thick] (A8.194) to (AC8.352)
        node[below,xshift=3.5mm] {$r_2$};
        \draw[edge, thick] (A8) to (2A8)
        node[above,xshift=-7mm] {$r_3$};
        \draw[edge, thick] (2A8) to (AB8)
        node[above,xshift=-9mm] {$r_4,r_5$};
        \draw[edge1, thick] (AB8) to (AC8)
        node[below,xshift=16mm,yshift=-1mm] {$r_6$};
        
\end{tikzpicture}
\end{center}
}
\caption{
Eight translated networks with WR and ZD up to order two of the CRN shown in Figure \ref{fig:CRN:stocastic:example} obtained using TOWARDZ. Among these translated networks, only four of them (T1, T4, T5, and T8) satisfy the propensity factorization condition \eqref{eq:prop-fact}. }\label{fig:translations:stochastic:example}
\end{figure}

Using TOWARDZ, we obtained eight translated networks with WR and ZD (Figure \ref{fig:translations:stochastic:example}). Among these translated networks, only four of them, i.e., T1, T4, T5 and T8, satisfy the propensity factorization condition \eqref{eq:prop-fact}. We use the translated network T1 to compute the stationary distribution. The propensities and their factorizations are  written in the following form:
\begin{align}
    \lambda_1(\mathbf{n})&=\alpha_1 n_C = \kappa_1 \theta (\mathbf{n}) \omega(n_A,n_B,n_C-1) \mathbf{1}_{\{n_C \ge 1\}}
    \label{example:prop:factor:eq:1}
    \\
    \lambda_2(\mathbf{n})&=\alpha_2 = \kappa_2 \theta (\mathbf{n}) \omega(\mathbf{n})
    \label{example:prop:factor:eq:2}
    \\
    \lambda_3(\mathbf{n})&=\alpha_3 = \kappa_3 \theta (\mathbf{n}) \omega(\mathbf{n})
    \label{example:prop:factor:eq:3}
    \\
    \lambda_4(\mathbf{n})&=\alpha_4 n_A +\alpha_5 n_A (n_A -1) = \kappa_4 \theta (\mathbf{n}) \omega(n_A-1,n_B,n_C)\mathbf{1}_{\{n_A \ge 1\}}
    \label{example:prop:factor:eq:4}
    \\
    \lambda_5(\mathbf{n})&=\alpha_6 n_B (n_B -1) = \kappa_5 \theta (\mathbf{n}) \omega(n_A,n_B-1,n_C) \mathbf{1}_{\{n_B \ge 1\}}
    \label{example:prop:factor:eq:5}
\end{align}
where $\mathbf{n}=(n_A,n_B,n_C)$.

Thanks to Theorem \ref{thm:polynomial-kappa}, we can let $\kappa_k=\alpha_k$ for $k=1,2,3$, $\kappa_4=\alpha_5$, and $\kappa_5=\alpha_6$. Both of the second equations in \eqref{example:prop:factor:eq:2} and \eqref{example:prop:factor:eq:3} yield $1=\theta (\mathbf{n}) \omega(\mathbf{n})$. Dividing \eqref{example:prop:factor:eq:1}, \eqref{example:prop:factor:eq:4}, and \eqref{example:prop:factor:eq:5} by the previously obtained equation gives
\begin{align*}
    \omega(\mathbf{n}) &= \frac{1}{n_C}\omega(n_A,n_B,n_C-1),\\
    %\label{example:prop:factor:eq:6}\\
    \omega(\mathbf{n}) &= \frac{\alpha_5}{\alpha_4 n_A +\alpha_5 n_A (n_A-1)}\omega(n_A-1,n_B,n_C),\\
    %\label{example:prop:factor:eq:7}\\
    \omega(\mathbf{n}) &= \frac{1}{n_B(n_B-1)}\omega(n_A,n_B-1,n_C).
    %\label{example:prop:factor:eq:8}
\end{align*}
Solving the recurrence relations, we get
\\$\omega(\mathbf{n}) = \dfrac{1}{n_B!(n_B-1)!n_C!}\left[\displaystyle \prod^{n_A}_{j=1}\dfrac{\alpha_5}{\alpha_4(n_A-j+1)+\alpha_5(n_A-j+1)(n_A-j)}\right]\omega(0,1,0).$
Then, $\theta(\mathbf{n}) = \dfrac{{n_B!(n_B-1)!n_C!}}{\omega(0,1,0)}\left[\displaystyle \prod^{n_A}_{j=1}\dfrac{\alpha_4(n_A-j+1)+\alpha_5(n_A-j+1)(n_A-j)}{\alpha_5}\right].$\\
Meanwhile, we calculate the complex balanced equilibrium of T1 from the equations $\kappa_3=\kappa_4 c_A = \kappa_5 c_B$ and $\kappa_1 c_C = \kappa_2+\kappa_3 = \kappa_2 +\kappa_5 c_B$:
$$(c_A,c_B,c_C)=\left(\dfrac{\kappa_3}{\kappa_4},\dfrac{\kappa_3}{\kappa_5},\dfrac{\kappa_2+\kappa_3}{\kappa_1}\right)=\left(\dfrac{\alpha_3}{\alpha_5},\dfrac{\alpha_3}{\alpha_6},\dfrac{\alpha_2+\alpha_3}{\alpha_1}\right).$$
Therefore, the stationary distribution is given by
\[\pi(n_A,n_B,n_C)=\begin{cases}
    M \dfrac{c_A^{n_A}c_B^{n_B}c_C^{n_C}}{\theta(n_A,n_B,n_C)} & \text{for } {(n_A,n_B,n_C)\ge (0,1,0)}\\
    0    & \text{otherwise}
\end{cases}\]
where $M$ is a positive normalizing constant. It can be easily seen that this formula is summable on the state space.
\label{example:stochastic}
\end{example}

\subsection{Checking irreducibility of stochastic CRNs}
Since the positivity of stationary probability $\pi(\mathbf n)$ at state $\mathbf n$ indicates finite returning time to $\mathbf n$, we can conclude that the closed subset $\Gamma$ in Theorem \ref{thm:hong2021-main} is a union of irreducible components (i.e., equivalent classes of all communicable states) \cite{norris1998markov}.

In this section, without relying on the propensity factorization required in Theorem \ref{thm:hong2021-main}, we show that network translation of CRNs can be used to check the irreducibility of the associated CTMC. In this way, by identifying irreducible components, we can get important insights about long-term behaviors: once a non-explosive CTMC is confined on a closed irreducible component with more than one state, i) no possibility of absorption (the system is trapped on a single state) is guaranteed, ii) a stationary distribution is uniquely defined if it exists, iii) convergence to the stationary distribution holds, and iv) ergodicity also holds \cite{norris1998markov}. Identifying irreducible components can also enhance the speed of stochastic simulations for the CTMC \cite{gupta2018computational}, and it can also help specify the stationary distribution of CRNs that have WR and ZD as discussed in Remark \ref{rmk:product form and irr}.
However, not only is identifying all the irreducible components challenging \cite{gupta2018computational}, but also checking whether each state belongs to an irreducible component, the so-called irreducibility of the state space, is not straightforward \cite{pauleve2014dynamical}.

It is known that WR of a CRN implies that the state space of the associated CTMC is a union of closed irreducible components, \revise{when $\mathbf n \geq \nu_k$ if and only if $\lambda_k(\mathbf n)>0$} for any reaction $\nu_k\to \nu'_k$ \cite{pauleve2014dynamical}. Once network translation is combined with this fact, we obtain the following proposition, which allows us to easily check the irreducibility.
 
\begin{proposition}\label{prop:irr}
Suppose that a CRN can be translated to a CRN that has WR.
Suppose further that for any reaction $\nu_k \to \nu'_k$ in the translated CRN, \revise{$\mathbf n \geq \nu_k$ if and only if the associated propensity $\lambda_k(\mathbf n)$ is positive}. Then the state space $\mathbb Z^d_{\ge 0}$ of the associated CTMC with the CRN is a union of closed irreducible components.
\end{proposition}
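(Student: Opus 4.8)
The plan is to reduce the claim to the result cited immediately above \cite{pauleve2014dynamical}, using the fact that network translation leaves the associated CTMC unchanged. The key observation is that the CTMC attached to a CRN is completely determined by its transition rates, and the rate of jumping from a state $\mathbf n$ to $\mathbf n+\zeta$ equals the aggregated propensity $\sum_{k:\nu_k'-\nu_k=\zeta}\lambda_k(\mathbf n)$. Since we work with stochastic models, the relevant notion of translation is the stochastic one, so by Definition~\ref{def:network:translation} the translated network satisfies $\sum_{k:\nu_k'-\nu_k=\zeta}\lambda_k(\mathbf n)=\sum_{\tilde k:\tilde\nu_{\tilde k}'-\tilde\nu_{\tilde k}=\zeta}\tilde\lambda_{\tilde k}(\mathbf n)$ for every $\zeta\in\mathbb Z^d$ and every $\mathbf n\in\mathbb Z^d_{\ge 0}$, as in \eqref{eq:network:translation:sum:prop2}. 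Thus the original CRN and its translation induce identical aggregated jump rates in every direction, hence the same generator and the same CTMC on $\mathbb Z^d_{\ge 0}$.

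First I would make this identification explicit: writing $X$ for the CTMC of the original CRN and $\widetilde X$ for that of the translated CRN supplied by the hypothesis, the matching of aggregated propensities shows that $X$ and $\widetilde X$ have the same generator, so their reachability structure and, in particular, their decomposition into closed irreducible components coincide. Next I would apply \cite{pauleve2014dynamical} to the translated network: it is weakly reversible by assumption, and by the second hypothesis every reaction $\nu_k\to\nu_k'$ of it satisfies $\mathbf n\ge\nu_k\iff\lambda_k(\mathbf n)>0$. These are exactly the conditions under which the cited result guarantees that the state space $\mathbb Z^d_{\ge 0}$ of the associated CTMC is a union of closed irreducible components. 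Applying this to $\widetilde X$ and transporting the conclusion back to $X$ through the generator identification finishes the proof.

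The step that most deserves care is verifying that the two chains genuinely coincide rather than merely sharing stoichiometric vectors: translation merges and divides reactions, regrouping and splitting the propensities, and one must confirm this never changes the total jump rate in any direction $\zeta$. This is precisely the content of \eqref{eq:network:translation:sum:prop2}, so invoking Definition~\ref{def:network:translation} suffices and no separate computation is needed. The only genuine subtlety is bookkeeping: the positivity hypothesis and weak reversibility are imposed on the \emph{translated} network, whose propensities may be non-mass-action, and it is these reactions, not the original ones, that feed into \cite{pauleve2014dynamical}; keeping the two reaction sets distinct throughout prevents any circularity.
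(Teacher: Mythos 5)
Your proposal is correct and follows essentially the same route as the paper, which obtains the proposition precisely by applying the result of \cite{pauleve2014dynamical} to the weakly reversible translated network and then noting that, by Definition~\ref{def:network:translation} via \eqref{eq:network:translation:sum:prop2}, the translated and original CRNs induce identical aggregated jump rates and hence the same CTMC, so the decomposition of $\mathbb Z^d_{\ge 0}$ into closed irreducible components transfers back. Your care in keeping the positivity hypothesis attached to the translated (possibly non-mass-action) reactions is exactly the right bookkeeping.
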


Here, we show that a sort of converse of Proposition \ref{prop:irr} also holds as follows. This is in a slightly generalized manner since translatability to a CRN with WR is a necessary condition of irreducibility with a weaker restriction on the propensities.

\begin{theorem}\label{thm:WR and irr}
For a CRN with an associated stochastic system, suppose that the following two conditions hold:
\begin{itemize}
\item[(i)] a nonempty closed irreducible component, $U \subseteq \mathbb{Z}^d_{\geq 0}$, exists, and
\item[(ii)] each reaction propensity $\lambda_k(\mathbf{n})$ is not identically zero on $U$, i.e., there is at least one state $\mathbf{n}_k \in X$ such that $\lambda_k(\mathbf{n}_k) > 0$ for each $k$. 
\end{itemize}
Then there exists a translation to a CRN with WR.
\end{theorem}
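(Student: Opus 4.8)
The plan is to reduce the statement to the kernel criterion of Theorem~\ref{theorem:WR:translation}: that theorem shows a CRN is translatable to a weakly reversible one if and only if $\ker(T) \cap \mathbb{R}_{>0}^r \neq \varnothing$, so it suffices to construct a strictly positive vector $a \in \mathbb{R}_{>0}^r$ annihilated by the stoichiometric matrix $T = [\zeta_1 \ \cdots \ \zeta_r]$, where $\zeta_k = \nu_k' - \nu_k$. The idea is to read off such a vector from closed cycles of the Markov chain restricted to the component $U$.

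First, for each reaction index $k$ I would use hypothesis (ii) to pick a state $\mathbf{n}_k \in U$ with $\lambda_k(\mathbf{n}_k) > 0$. Because the propensity is positive there, reaction $k$ induces an actual transition $\mathbf{n}_k \to \mathbf{n}_k + \zeta_k$ of the CTMC, and since $U$ is closed (hypothesis (i)) the target $\mathbf{n}_k + \zeta_k$ again lies in $U$. Next I would invoke irreducibility of $U$: the states $\mathbf{n}_k + \zeta_k$ and $\mathbf{n}_k$ communicate, so there is a finite sequence of admissible reactions carrying $\mathbf{n}_k + \zeta_k$ back to $\mathbf{n}_k$ while staying inside $U$. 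Concatenating the single firing of reaction $k$ with this return path yields a closed cycle $C_k$ based at $\mathbf{n}_k$.

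Let $w^{(k)} \in \mathbb{Z}_{\geq 0}^r$ count how many times each reaction is fired along $C_k$. Because $C_k$ returns to its starting state, the net displacement vanishes, i.e. $\sum_{j=1}^r w^{(k)}_{[j]} \zeta_j = \mathbf{0}$, so $w^{(k)} \in \ker(T)$; moreover $w^{(k)}_{[k]} \geq 1$ since reaction $k$ is fired at the very first step. Setting $a := \sum_{k=1}^r w^{(k)}$, the kernel being a subspace gives $a \in \ker(T)$, while $a_{[k]} \geq w^{(k)}_{[k]} \geq 1 > 0$ for every $k$, so $a \in \ker(T) \cap \mathbb{R}_{>0}^r$. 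Theorem~\ref{theorem:WR:translation} then produces a translation to a CRN with WR, as required.

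The step I expect to be the main obstacle is the second one, namely turning the probabilistic notion of irreducibility into the purely combinatorial existence of a reaction path whose net stoichiometric change is exactly $-\zeta_k$. One must argue that communication in the CTMC is realized by a sequence of reactions, each having positive propensity at the state where it is applied, and that closedness of $U$ keeps every intermediate state (and hence every transition) inside $U$; only then is the count vector $w^{(k)}$ well defined and guaranteed to lie in $\ker(T)$. A minor point to dispatch is the degenerate case $\zeta_k = \mathbf{0}$, where reaction $k$ is already a self-loop at $\mathbf{n}_k$ and one may simply take $w^{(k)}$ to be the $k$-th standard basis vector.
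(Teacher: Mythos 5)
Your proposal is correct and takes essentially the same route as the paper's proof: for each reaction $k$, fire it from a state $\mathbf{n}_k \in U$ with $\lambda_k(\mathbf{n}_k)>0$, use closedness and irreducibility of $U$ to obtain a return path realizing $-\zeta_k$ as a nonnegative integer combination of stoichiometric vectors, and conclude via Theorem~\ref{theorem:WR:translation} that $\ker(T)\cap\mathbb{R}^r_{>0}\neq\varnothing$. The only cosmetic difference is that you sum explicit cycle-count vectors $w^{(k)}$ directly, whereas the paper reaches the same positive kernel vector by noting the relation is exactly \eqref{eq:stoi_reverse} and reusing the $T(A+I)=\mathbf{0}$ computation from the proof of Theorem~\ref{theorem:WR:translation}.
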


\begin{remark}
Note that for a CRN following the mass-action kinetics, the propensities of any translated CRNs are nonzero at $\mathbf n$ whose entries are large enough. Hence condition (ii) in Theorem \ref{thm:WR and irr} holds. Therefore, one way to interpret Theorem \ref{thm:WR and irr} is that if any translated CRN of the given CRN does not admit WR, then there is no closed irreducible component containing such $\mathbf n$. This further implies that the state space of the associated CTMC is not irreducible. 
\end{remark} 
\begin{remark}
For a given CRN, condition (ii) usually holds. If not, there exists a reaction that cannot fire on an irreducible component $U$. In this case, we may remove the reaction from the CRN.  
\end{remark}

\begin{proof}[Proof of Theorem~\ref{thm:WR and irr}]
For the given CRN with an associated stochastic system, let us fix a reaction $\nu_k \to \nu'_k \in \mathcal R$. Since $\lambda_k(\mathbf{n}_k) > 0$ for some $\mathbf{n}_k \in U$ and the set $U$ is closed, the state $\mathbf{n}_k + \zeta_k$ is also in $U$ where $\zeta_k$ is the stoichiometric vector of the $k$-th reaction. As $U$ is an irreducible component, there exists a sequence of reactions from $\mathbf{n}_k + \zeta_k$ to $\mathbf{n}_k$. In other words, there are non-negative integers $a_{k1}, \ldots, a_{kr}$ such that 
\begin{equation*}
-\zeta_k = \sum_{i=1}^{r} a_{ik} \zeta_i.
\end{equation*}
This is exactly the same as \eqref{eq:stoi_reverse}, which implies that there exists a vector with positive entries in the kernel of the stoichiometric matrix $T$. By Theorem \ref{theorem:WR:translation}, it is equivalent to translatability of the given CRN to a CRN having WR. 
\end{proof}

\subsection{Derivation of positive steady states of deterministic CRNs}\label{sec:Paramet}

The analytic solution of positive steady states of a deterministic mass-action system with the desired structures WR and ZD has a known form, which is expressed as a monomial parametrization \cite{Muller2012,Muller2014}.
On the other hand, for mass-action systems that do not possess such properties, network translation can be used to obtain a CRN with WR and ZD \cite{Johnston2014}. However, the resulting translated CRN does not follow the mass-action kinetics but follows the so-called generalized mass-action kinetics \revise{(see the Supplementary Materials for details)}. Johnston et al. \cite{JMP2019:parametrization} established conditions on the generalized mass-action kinetics for deriving analytic steady state solutions (see \cite[Theorem 15]{JMP2019:parametrization}) using the notion of \emph{generalized} CRN (GCRN).

\revise{GCRNs have additional objects called \emph{kinetic complexes} compared with standard CRNs, and based on this, another type of deficiency called the \emph{kinetic deficiency} is defined in a similar way to the standard deficiency (see Supplementary Materials and \cite{JMP2019:parametrization} for more details about obtaining a GCRN and its kinetic deficiency).} When one finds a network translation to another network having WR and ZD, then the positive steady state can be analytically derived as long as the kinetic deficiency is zero. 

However, although translation algorithms for finding a CRN with WR and ZD were proposed \cite{Johnston2014,Johnston2019,TJ2018:networktranslation}, there are no accessible computational packages for such algorithms, and even the algorithms provide only a single network that may have non-zero kinetic deficiency. To this end, we use our code that can list a number of translated networks with WR and ZD so that one could find the translated networks with zero kinetic deficiency, leading to the derivation of the closed forms of positive steady states. 

\begin{example}

We illustrate this application by using a deterministic mass-action system whose underlying CRN is not weakly reversible and has a deficiency of two (Figure \ref{fig:modification:Johnston}).

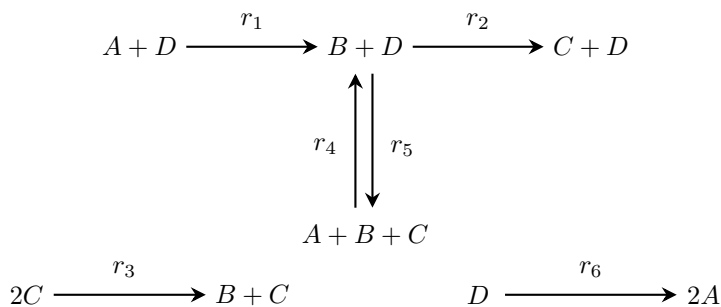
\begin{figure}[!h]
\begin{center}
 \begin{tikzpicture}
        \tikzset{vertex/.style = {minimum size=2em}}
        \tikzset{edge/.style = {->,> = {Stealth[length=2mm, width=2mm]}}}
        %\tikzset{edge/.style = {->,> = latex'}}
        % Vertices
        \node[vertex] (AD) at (-3,0) {$A+D$};
        \node[vertex] (BD) at (0,0) {$B+D$};
        \node[vertex] (CD) at (3,0) {$C+D$};
        \node[vertex] (ABC) at (0,-2.5) {$A+B+C$};
        
        \node[vertex] (TwoC) at (-4.5,-3.3) {$2C$};
        \node[vertex] (BC) at (-1.5,-3.3) {$B+C$};
        \node[vertex] (D) at (1.5,-3.3) {$D$};
        \node[vertex] (TwoA) at (4.5,-3.3) {$2A$};
        
        % Edges
        \draw[edge, thick] (AD) to (BD)
                node[above,xshift=-15mm,yshift=1mm] {$r_1$};
        \draw[edge, thick] (BD) to (CD)
                node[above,xshift=-15mm,yshift=1mm] {$r_2$};
        \draw[edge, thick] (BD.285) to (ABC.75)
                node[above,xshift=4mm,yshift=5.6mm] {$r_5$};
        \draw[edge, thick] (ABC.110) to (BD.250)
                node[above,xshift=-4mm,yshift=-12mm] {$r_4$};
        \draw[edge, thick] (TwoC) to (BC)
                node[above,xshift=-17mm,yshift=1mm] {$r_3$};
        \draw[edge, thick] (D) to (TwoA)
                node[above,xshift=-15mm,yshift=1mm] {$r_6$};
   % \addvmargin{0.5mm} 
\end{tikzpicture}
\caption{A CRN that has no WR and a deficiency of two. This is a slightly modified version of a CRN in Tonello and Johnston's work \cite{TJ2018:networktranslation}. Here, $r_1, \ldots, r_6$ indicate the reactions, not the rate constants.}\label{fig:modification:Johnston}
\end{center}
\end{figure}

First by applying TOWARDZ to the given CRN, we can obtain nine translated CRNs that have WR and ZD up to order three (Figure \ref{fig:translations:deterministic:example}). \revise{The complexes in the parentheses are kinetic complexes that indicate where the complexes in the translated networks are originated from. By using the kinetic complexes, we can finally obtain GCRNs from the translated networks. Among the nine translated networks in Figure \ref{fig:translations:deterministic:example}, we use T4 to illustrate the procedure of obtaining GCRNs such as a network shown in Figure \ref{fig:translation:modification:Johnston}. The reaction $r_1:A+D \to B+D$ in the original network (Figure \ref{fig:modification:Johnston}) becomes the reaction $r_1:2A+C \to A+B+C$ in T4 (Figure \ref{fig:translations:deterministic:example}). Then, we transfer the original source complex $A+D$ as the kinetic complex associated with the stoichiometric (i.e., standard) complex $2A+C$. The kinetic complex $A+D$ determines the kinetics of $r_1$ in T4 (i.e., $k_1 ad$ where $a$ and $d$ denote the concentrations of the species $A$ and $D$, respectively), which is the same as the kinetics of the original reaction. Similarly, as the original reaction $r_4:A+B+C \to B+D$ (Figure \ref{fig:modification:Johnston}) has $A+B+C$ as the source complex, the kinetic complex of $r_4$ is $A+B+C$. Moreover, the network translation caused reactions $r_3$ and $r_6$ to have the same source complex $C+D$, but the original $r_3$ and $r_6$ have source complexes $2C$ and $D$, respectively (Figure \ref{fig:translations:deterministic:example}). Hence, the stoichiometric complex $C+D$ has two associated kinetic complexes $2C$ and $D$, corresponding to the reactions $r_3$ and $r_6$, respectively. However, a node must be associated with a single kinetic complex in a GCRN representation. Thus, we introduce a dummy edge, $r_\sigma$, between \revise{the stoichiometric complex $C+D$ itself (Figure \ref{fig:translation:modification:Johnston})} so that each node corresponds to a single kinetic complex. This changes the network structure but not the ODEs since the stoichiometric vector is zero (i.e., $(C+D)-(C+D)=0$). In this way, we can obtain all the kinetic complexes for this translated CRN. Since the stoichiometric vectors and the kinetics are always preserved, the original system of ODEs is maintained.}

The kinetic deficiency of the GCRN \revise{representation of T4} in Figure \ref{fig:translation:modification:Johnston} is computed as follows. Recall that for deficiency, we computed the dimension of the vector space spanned by the stoichiometric vectors. Analogously, we obtain vectors representing the difference between the two kinetic complexes linked by a reaction, and compute the dimension of the vector space spanned by those vectors. For reactions $r_1,r_2,\dots,r_6$ and $r_\sigma$, the vectors are
\begin{align*}
    &r_1: (B+C)-(D) = [0,1,1,-1]^\top, \\
    &r_2: (2C)-(B+D) = [0,-1,2,-1]^\top,\\
    &r_3: (B+D)-(2C) = [0,1,-2,1]^\top,\\
    &r_4: (B+D)-(A+B+C) = [-1,0,-1,1]^\top,\\
    &r_5: (A+B+C)-(B+D) = [1,0,1,-1]^\top,\\
    &r_6: (A+D)-(D) = [1,0,0,0]^\top, \text{ and} \\
    &r_\sigma: (D)-(2C) = [0,-1,2,-1]^\top.
\end{align*}
The dimension of the vector space spanned by the above vectors ($\tilde s$) is four. Using this, we can calculate the kinetic deficiency, $\tilde n-\tilde l-\tilde s$
where $\tilde n$ and $\tilde l$ are the number of kinetic complexes and connected components in the GCRN, respectively. Since  $\tilde n=5$ and $\tilde l=1$ for the GCRN in Figure \ref{fig:translation:modification:Johnston}, the kinetic deficiency is $5-1-4=0$. 
\revise{Among the GCRN representations of nine translations in Figure \ref{fig:translations:deterministic:example}, only one has the kinetic deficiency of zero (T4 with GCRN representation in Figure \ref{fig:translation:modification:Johnston}).

When the GCRN has zero kinetic deficiency, the parametrization of positive steady states can be done easily (see Theorem SM4.2a). Without TOWARDZ, one may manually search all the translated networks that satisfy WR, ZD and zero kinetic deficiency by hand. This task is, of course, challenging and time-consuming.}

\begin{figure}[!h]
{
\footnotesize
\begin{center}
\includegraphics[width=11cm,height=11cm,keepaspectratio]{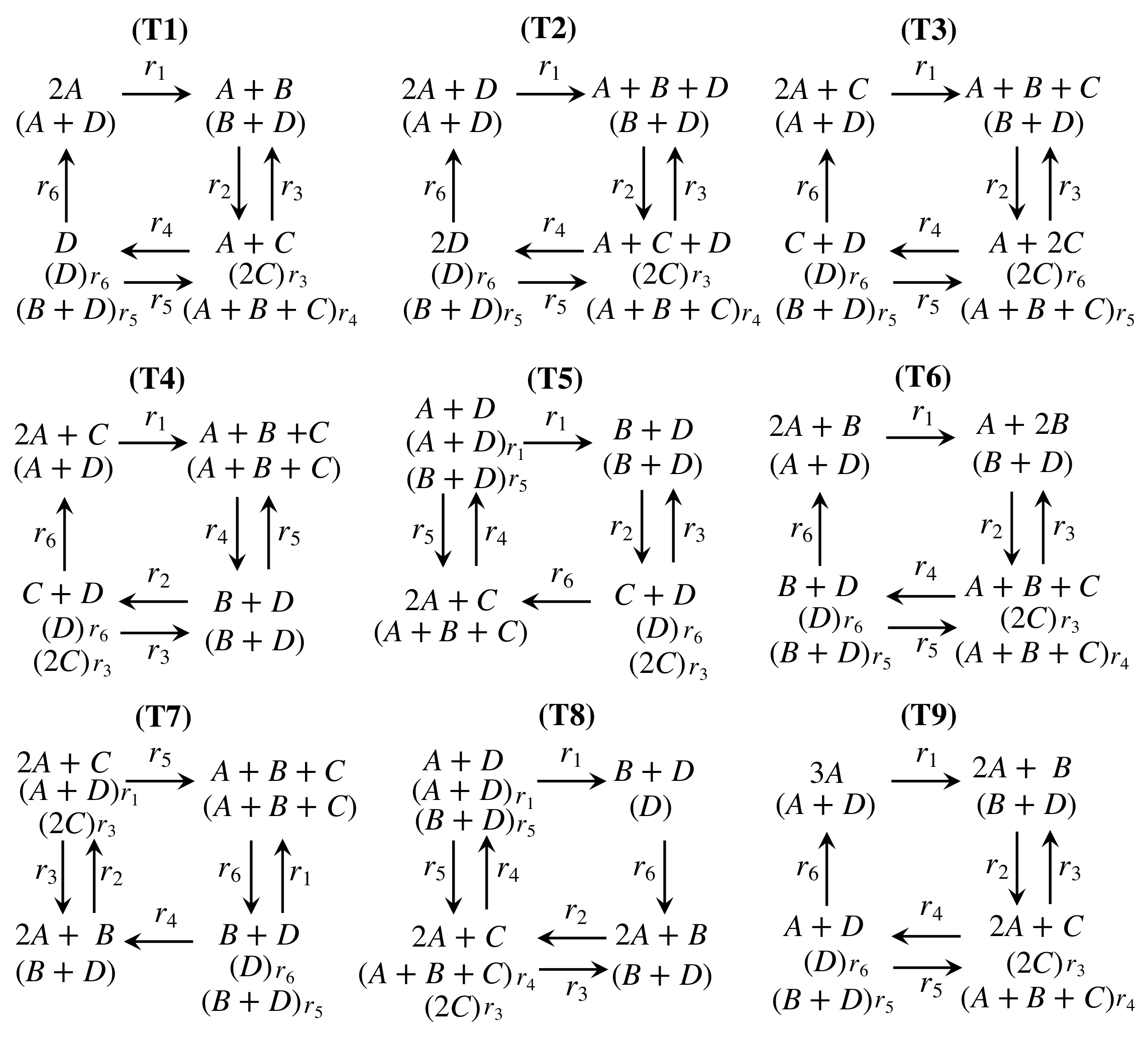}
\end{center}
}
\caption{\revise{Nine translated networks with WR and ZD up to order three of the CRN shown in Figure \ref{fig:modification:Johnston} obtained using TOWARDZ. We denote by $r_i$ for $i=1,\dots,6$ above the arrows to represent the reactions rather than the reaction rates. Here, the complexes without parentheses are the stoichiometric complexes, while the ones with parentheses are the kinetic complexes.}
}
\label{fig:translations:deterministic:example}
\end{figure}

\begin{figure}[!h]
\begin{center}
\begin{tikzpicture}
        \tikzset{vertex/.style ={rectangle, draw, minimum width =10pt,{minimum size=2em}}}
        \tikzset{edge/.style = {->,> = {Stealth[length=2mm, width=2mm]}}}
        %\tikzset{edge/.style = {->,> = latex'}}
        % Vertices
        %\node {\makecell[l]{0,1,2,\\3,4,5}} (q_0)
           % \node        (A) [right of=A,fill=blue!25,text width=3cm]{This is a demonstration text for showing how line breaking works.};
        \node[vertex,text width=1.7cm,text centered] (X21) at (0,3) {{ $1$ \ \ \ \ \ $2A+C$ \ \ \ \ $(A+D)$  }};
        \node[vertex,text width=1.7cm,text centered] (X22) at (3,3) {{ $2$ \ \ \ \ \ $A+B+C$ \\ $(A+B+C)$}};
        \node[vertex,text width=1.7cm,text centered] (X11) at (0,0) {{ $4$ \ \ \ \  $C+D$ \ \ $(2C)$}};
        \node[vertex,text width=1.7cm,text centered] (X12) at (3,0) {{ $3$ \ \ \ \  $B+D$ \ \ \ \  \ \ \ \ \ \ \ \ \ $(B+D)$ \ }};
        \node[vertex,text width=1.7cm,text centered] (X1n1) at (-3,0) {{ $5$ \ \ \ \  $C+D$ \ \ $(D)$ }};
        % Edges
        \draw[edge, thick] (X21) to (X22)
        node[above,xshift=-15mm] {$r_1$};
        \draw[edge, thick] (X11) to (X1n1)
        node[above,xshift=16mm] {$r_\sigma$};
        \draw[edge, thick] (X1n1) to (X21)
        node[above,xshift=-18mm,yshift=-16.5mm] {$r_6$};
        
        \draw[edge, thick] (X11.352) to (X12.188)
        node[below,xshift=-5mm] {$r_3$};
        \draw[edge, thick] (X12.168) to (X11.12)
        node[above,xshift=5.7mm] {$r_2$};
        
        \draw[edge, thick] (X12.78) to (X22.282)
        node[above,xshift=3.3mm,yshift=-10.3mm] {$r_5$};
        \draw[edge, thick] (X22.260) to (X12.100)
        node[below,xshift=-3.5mm,yshift=11mm] {$r_4$};
\end{tikzpicture}
\end{center} 
\caption{\revise{A GCRN representation of T4 in Figure \ref{fig:translations:deterministic:example} with five nodes. The reactions are denoted by $r_i$ for $i=1,\dots,6$. The complexes without parentheses are the stoichiometric complexes, while the ones with parentheses are the kinetic complexes.}
}\label{fig:translation:modification:Johnston}
\end{figure}
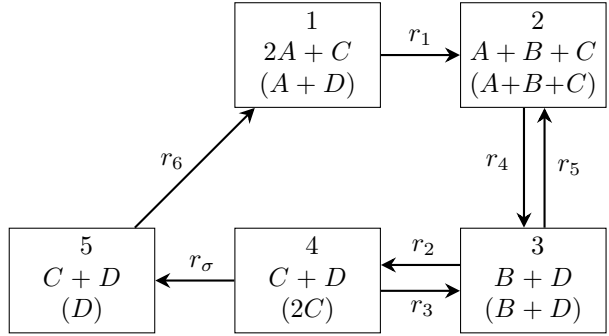

Then, we can get the analytic form of the positive steady state $(a_0,b_0,c_0,d_0)$ using the formula in Theorem\revise{~\ref{Supp-thm:parametrization}} from \cite{JMP2019:parametrization} as follows:
\begin{align*}
    a_0&=\dfrac{k_6}{k_1},\\
    b_0&=\dfrac{k_6 (k_3+k_\sigma)}{k_2 k_\sigma},\\
    c_0&=\dfrac{k_4 (k_6)^2 (k_3+k_\sigma)}{k_1 k_\sigma (k_3 k_5 + k_2 k_\sigma + k_5 k_\sigma)},\\
    d_0&=\dfrac{(k_4)^2 (k_6)^3 (k_3+k_\sigma)^2}{(k_1)^2 k_\sigma (k_3 k_5 + k_2 k_\sigma + k_5 k_\sigma)^2}
\end{align*}

where $k_\sigma > 0$ is a free parameter \revise{that generally represents different stoichiometric compatibility classes (see the Supplementary Materials for the details)}.

\label{example:deterministic}
\end{example}

\section{Discussion}\label{section:discussion}
In this study, we have introduced necessary conditions for translatability to CRNs with WR and ZD. Specifically, we have proved that the existence of a positive vector in the kernel of the stoichiometric matrix of a CRN is equivalent to its translatability to a CRN with WR and the specific linear programming problem (Theorem \ref{theorem:WR:translation}), and we have also shown that this condition is necessary for the existence of a nonempty irreducible component of a CTMC (Theorem \ref{thm:WR and irr}). For translatability to a CRN with ZD, we have established an upper bound on the number of reactions for a CRN to be translated to another CRN with ZD (Theorems \ref{ZD:1} and \ref{ZD:2}). Based on these conditions, we have developed a computational package TOWARDZ. It is an efficient network search code (Figure \ref{fig:computation-time}), which allows us to quantitatively investigate the prevalence of the CRNs with WR \revise{and/or} ZD under fixed numbers of species and reactions (Figure \ref{fig:fraction WRZD}). Finally, we have demonstrated the usefulness of the efficient code to derive analytic formulas of steady states and stationary distributions (Examples \ref{example:stochastic} and \ref{example:deterministic}). In particular, to derive stationary distributions, we have provided the rigorous theoretical background of choosing the rate constants for propensity factorization (Lemma \ref{Supp-lemma:kappa-affine} and Theorem \ref{thm:polynomial-kappa}), which was not fully justified in previous work \cite{Hong2021:CommBio}. 

Although TOWARDZ itself does not automatically derive the analytic formulas for stationary distributions, it can be automatically derived by using both TOWARDZ and a previously developed package that analytically derives stationary distributions of translated networks with WR and ZD \cite{Hong2021:CommBio}. However, the computational package for deriving the steady state of the translated networks with WR and ZD has not been developed yet. It would be an interesting future work to extend TOWARDZ to automatically derive steady states. 

\revise{In addition, network translation naturally induces non-mass-action kinetics. Since previously proposed theorems (Theorem 2.3 and 2.4, respectively) aim at only mass-action cases, throughout this paper, we use generalized versions (Theorems 4.1 and SM4.2) to deal with non-mass-action kinetics. To maximize the advantages of the structural properties, WR and ZD, one promising future direction is further generalization of Theorems 4.1 and SM4.2.}

\revise{The present algorithm can help in understanding biochemical systems by identifying a closed form of the stationary distribution, which is necessary to obtain an accurate reduction of a multi-timescale stochastic model \cite{Rao2003stochastic, Song2021, Hong2021:CommBio, Kim2017} and a likelihood function for Bayesian inference \cite{Kim2013}. Furthermore, network translation allows us to derive stationary distributions of some autophosphorylation reaction networks \cite{Hong2021:CommBio} and other reaction networks \cite{Hoessly-Mazza2019, Cappelletti-Wiuf2016}. Also, it allows us to obtain a parametrization of steady states \cite{JMP2019:parametrization, Johnston2014}, and the results are extended by combining that with the network decomposition \cite{Hernandez2022}.}

\section*{Acknowledgments}
We thank Dabeen Lee (IBS) for helpful discussions about the equivalent linear programming problem to the existence of a positive vector in the kernel of a matrix.

\bibliographystyle{siamplain}
\bibliography{references_response}

\end{document}

% --- supplement: ex_supplement.tex ---

\maketitle

\section{A detailed example}

Here we include some equations and theorem-like environments to show
how these are labeled in a supplement and can be referenced from the
main text.
Consider the following equation:
\begin{equation}
  \label{eq:suppa}
  a^2 + b^2 = c^2.
\end{equation}
You can also reference equations such as \cref{eq:matrices,eq:bb} 
from the main article in this supplement.

\lipsum[100-101]

\begin{theorem}
An example theorem.
\end{theorem}

\lipsum[102]
 
\begin{lemma}
An example lemma.
\end{lemma}

\lipsum[103-105]

Here is an example citation: \cite{KoMa14}.

\section[Proof of Thm]{Proof of \cref{thm:bigthm}}
\label{sec:proof}

\lipsum[106-112]

\section{Additional experimental results}
\Cref{tab:smfoo} shows additional
supporting evidence. 

\begin{table}[htbp]
\footnotesize
  \caption{Example table.}\label{tab:smfoo}
\begin{center}
  \begin{tabular}{|c|c|c|} \hline
   Species & \bf Mean & \bf Std.~Dev. \\ \hline
    1 & 3.4 & 1.2 \\
    2 & 5.4 & 0.6 \\ \hline
  \end{tabular}
\end{center}
\end{table}

\bibliographystyle{siamplain}
\bibliography{references}